\newcommand{\bs}[1]{\ensuremath{\boldsymbol{#1}}}
\newcommand{\pci}[1]{\ensuremath{\bs{p}_{\boldsymbol{c}_{i}}}\xspace}
\DeclareMathOperator{\boundeddist}{\mathcal{E}}
\DeclareMathOperator{\boundeddistover}{\mathcal{O}}
\DeclareMathOperator{\levelset}{\mathcal{L}}
\DeclareMathOperator{\prob}{\mathbb{P}}
\DeclareMathOperator{\minkdiff}{\ominus} 
\DeclareMathOperator{\minkadd}{\oplus}
\DeclareMathOperator{\targettube}{\mathscr{T}}
\DeclareMathOperator{\naturalnums}{\mathbb{N}}
\DeclareMathOperator{\realnums}{\mathbb{R}}
\DeclareMathOperator{\integers}{\mathbb{N}}
\def\dtinterval#1#2{\naturalnums_{[#1,#2]}}
\DeclareMathOperator{\targetset}{\mathcal{T}}
\DeclareMathOperator{\safeset}{\mathcal{K}}
\DeclareMathOperator{\rad}{Reach^{\flat}}
\DeclareMathOperator{\radover}{Reach\raisebox{2pt}{$^{\sharp}$}}
\DeclareMathOperator{\reach}{Reach}
\DeclareMathOperator{\statespace}{\mathcal{X}}  
\DeclareMathOperator{\inputspace}{\mathcal{U}}
\DeclareMathOperator{\distspace}{\mathcal{W}}
\def\onestepforwardreach#1{\mathscr{F}_{1, #1}}
\def\onestepbackwardreach#1{\mathscr{R}_{1, #1}}
\def\reachunder#1{\reach_{#1}^{\flat}(\targettube_{N}, \boundeddist)}
\def\multreachunder#1#2{\reach_{#1}^{\flat}(\targettube_{N}, \boundeddist_{#2})}
\def\multreachover#1#2{\reach_{#1}^{\sharp}(\targettube_{N}, \boundeddistover_{#2})}
\def\reachover#1{\reach_{#1}^{\sharp}(\targettube_{N}, \boundeddistover)}
\def\transpose#1{#1^{\top}}
\def\ONE{\boldsymbol{1}}
\def\lsc{l.s.c.}
\def\radnl#1#2{\reach_{#1}^{\flat+}(\targettube_{N}, \boundeddist_{#2})}
\def\radovernl#1#2{\reach_{#1}^{\sharp+}(\targettube_{N}, \boundeddistover_{#2})}
\def\radlin#1#2{\reach_{#1}^{\flat+}(\targettube_{N}, \boundeddist_{#2})}
\def\probdist#1{
    \let\argone=#1
    \ifnum \the\catcode`#1 = 11
        \prob_{\bar{w}_{[#1, N-1]}}
    \else
        \ifcat \argone 0
            \newcount\tempcount
            \tempcount=#1
            \prob_{\bar{w}_{[\the\tempcount, N-1]}}
        \else
            ???
        \fi
    \fi
}
\def\probxpi#1{
    \let\argone=#1
    \ifnum \the\catcode`#1 = 11
        \prob_{\bar{x}_{[#1+1, N]}}^{x_{#1}, \pi}
    \else
        \ifcat \argone 0
            \newcount\tempcount
            \tempcount=#1
            \prob_{\bar{x}_{[\advance\tempcount by 1 \the\tempcount, N]}}^{x_{\the\tempcount}, \pi}
        \else
            ???
        \fi
    \fi
}
\def\probxpistar#1{
    \let\argone=#1
    \ifnum \the\catcode`#1 = 11
        \prob_{\bar{x}_{[#1+1, N]}}^{x_{#1}, \pi^{\ast}}
    \else
        \ifcat \argone 0
            \newcount\tempcount=#1
            \prob_{\bar{x}_{[\advance\tempcount by 1 \the\tempcount, N]}}^{x_{\the\tempcount}, \pi^{\ast}}
        \else
            ???
        \fi
    \fi
}
\def\stochlevelset#1{\mathcal{L}_{#1}(\targettube_{N}, \alpha)}
\newtheorem{assumption}{Assumption}
\newtheorem{lemma}{Lemma}
\newtheorem{remark}{Remark}
\newtheorem{proposition}{Proposition}
\newtheorem{theorem}{Theorem}
\newtheorem{corollary}{Corollary}
\newtheorem{problem}{Problem}
\newenvironment{customthm}[1]
  {\innercustomthm}
  {\endinnercustomthm}
\begin{document}
\title{Lagrangian Approximations for Stochastic Reachability of a Target Tube}
\author{Joseph D. Gleason$^\dagger$, Abraham P. Vinod$^\dagger$, and Meeko M. K. Oishi
    \thanks{This material is based upon work supported by the National Science
    Foundation under Grant Number IIS-1528047, CMMI-1254990 (Oishi, CAREER),
    CNS-1329878, and the Lighting Enabled Systems and Applications Engineering
    Research Center (EEC-0812056). Any opinions, findings, and conclusions or
    recommendations expressed in this material are those of the authors and do
    not necessarily reflect the views of the National Science Foundation.
    \newline \indent 
    Joseph Gleason, Abraham Vinod, and Meeko Oishi are with Electrical and
    Computer Engineering, University of New Mexico, Albuquerque, NM; e-mail:
    gleasonj@unm.edu, aby.vinod@gmail.com, oishi@unm.edu (corresponding author).
    \newline \indent
    $\dagger$ These authors contributed equally to this work.}
}
\date{}
\maketitle

\begin{abstract}
    In this paper we examine how Lagrangian techniques can be used to compute
    underapproximations and overapproximation of the finite-time horizon,
    stochastic reach-avoid level sets for discrete-time, nonlinear systems.
    This approach is applicable for a generic nonlinear system without
    any convexity assumptions on the safe and target sets.
    We examine and apply our methods on the reachability of a target tube
    problem, a more generalized version of the finite-time horizon reach-avoid
    problem.
    Because these methods utilize a Lagrangian (set theoretic) approach, we
    eliminate the necessity to grid the state, input, and disturbance spaces
    allowing for increased scalability and faster computation. The methods
    scalability are currently limited by the computational requirements for
    performing the necessary set operations by current computational geometry
    tools. The primary trade-off for this improved extensibility is conservative
    approximations of actual stochastic reach set.   
    We demonstrate these methods on several examples including the standard
    double-integrator, a chain of integrators, and a 4-dimensional space vehicle rendezvous docking
    problem.
\end{abstract}

\section{Introduction}

Reach-avoid analysis is an established verification tool that provides formal
guarantees of both safety (by avoiding unsafe regions) and performance (by
reaching a target set). Because of these guarantees, it is often used in
systems that are safety-critical or expensive, such as space systems
\cite{lesser2013_spacecraft}, avionics \cite{Tomlin2003,summers2011stochastic},
biomedical systems \cite{maidens_2013}, and other applications
\cite{KariotoglouECC2011,ManganiniCYB2015,summers2010_verification}. The
reach-avoid set is the set of states for which there exists a control that
enables the state trajectory to reach a target at some finite time horizon,
$N$, while remaining within a safe set (avoiding an unsafe set) for all
instants in the time horizon. In a probabilistic system, satisfaction of the
reach-avoid objective is accomplished stochastically. A dynamic
programming-based solution characterizes the optimal value function, a function
that assigns to each initial state the optimal probability of achieving the
reach-avoid objective \cite{abate2008_reachability}. An appropriate level set
of this value function provides the {\em stochastic reach-avoid level set}, the
set of states for which probabilistic success of the reach-avoid objective is
assured with at least the desired likelihood.

The theoretical framework for the probabilistic reach-avoid calculation uses
dynamic programming \cite{summers2010_verification,abate2008_reachability},
and, hence, is computationally infeasible for even moderate-sized systems due
to the gridding of not only the state-space, but also of the input and
disturbance spaces \cite{AbateHSCC2007}. Alternatives to dynamic programming
include approximate dynamic programming \cite{ManganiniCYB2015,
KariotoglouECC2013, KariotoglouSCL2016}, Gaussian mixtures
\cite{KariotoglouSCL2016}, particle filters \cite{lesser2013_spacecraft,
ManganiniCYB2015}, and convex chance-constrained optimization
\cite{lesser2013_spacecraft, KariotoglouECC2011}. These methods have been
applied to systems that are at most 10-dimensional, at high memory and
computational costs \cite{ManganiniCYB2015}. Further, since an analytical
expression of the value function is not accessible, stochastic reach-avoid
level sets can be computed only up to the accuracy of the gridding. Recently, a
Fourier transform-based approach has provided greater scalability verifying LTI
systems of dimension up to $40$ \cite{VinodLCSS2017, VinodHSCC2018}.
In~\cite{VinodHSCC2018} the researchers established a set of sufficient
conditions in which the stochastic reach-avoid set is convex and compact,
enabling scalable polytopic underapproximation. However, this approach relies
on numerical quadrature and is restricted to verifying the existence of
open-loop controllers.

For deterministic systems (that is, systems without a disturbance input but
with a control input) or systems with disturbances that come from a bounded
set, Lagrangian methods for computing reachable sets are popular because they
do not rely on a grid and can be computed efficiently for high-dimensional
systems \cite{maidens_2013, guernic2009,borelli_2017}. Rather than gridding the
system, Lagrangian methods compute reachable sets through operations on sets,
e.g. intersections, Minkowski summation, unions, etc. Thus, Lagrangian methods
rely on computational geometry, whose scalability depends on the set
representation and the computational difficulty of the operations used
\cite{guernic2009}. For example, sets that are represented as either vertex or
facet polyhedra typically are limited by the need to solve the {\itshape
vertex-facet enumeration problem}. Common set representations and relevant
toolboxes for their implementation are: polyhedrons (\texttt{MPT} \cite{MPT3}),
ellipsoids (\texttt{ET} \cite{ellipsoidal}), zonotopes \cite{girard2005}
(\texttt{CORA} \cite{althoff2015}), star representation (\texttt{HyLAA}
\cite{bak2017}), and support functions \cite{leguernic2010}.

In this paper, we describe recursive techniques to obtain an under and an
overapproximation of the probabilistic reach-avoid level set using Lagrangian
methods. 
The underapproximation can be theoretically posed as the solution to the
\emph{reachability of a target tube} problem
\cite{bertsekas1971minimax,kerrigan2001robust,rakovic2006_reach}, originally
framed to compute reachable sets of discrete-time controlled systems with
bounded disturbance sets.
Motivated by the scalability of the Lagrangian method
proposed in~\cite{maidens_2013, saintpierre1994_viability} for viability
analysis in deterministic systems (that is, systems without a disturbance input
but with a control input), we seek a similar approach to compute the underapproximation 
via tractable set theoretic operations. We originally demonstrated these methods
in \cite{GleasonCDC2017} unifying these approaches for the terminal-time 
reach-avoid problem. The reachability of a target tube problem is, however, a more
generalized framework than the terminal-time reach-avoid problem.
Hence, we extend \cite{GleasonCDC2017} to the reachability of a target tube 
problem and additionally describe a recursive method for computing an overapproximation
to the stochastic reach-avoid level set. Borrowing notation and terminology from
\cite{kaynama2011} we call these under and overapproximations \emph{disturbance
minimal} and \emph{disturbance maximal reach sets}, respectively.

The disturbance minimal reach set (underapproximation) is the set of initial
states of the system for which there exists a disturbance that will remain
in the target tube despite the worst case choice of a disturbance drawn from 
a bounded set $\boundeddist$, i.e. for all disturbances in $\boundeddist$. The
disturbance maximal reach set (overapproximation) is the set of states for which there exists and
input that will remain in the target tube given the best choice of a disturbance
in a bounded set $\boundeddistover$, i.e. exists a disturbance in $\boundeddistover$.
The original formulation \cite{GleasonCDC2017} described the underapproximation
for a single bounded disturbance set $\boundeddist$. Here we will also demonstrate
that we can use many different bounded disturbance sets $\boundeddist_{i}$, 
$\boundeddistover_{i}$, $i \in \{1, \dots, M\}$, to help reduce the conservativeness
of the approximations. Using multiple bounded disturbance sets requires repeated
computations of the disturbance minimal and maximal reach sets. However, because
of the reduced computational time from using Lagrangian methods we can use
multiple disturbance sets and still provide substantially faster computation.



The remainder of the paper is as follows: Section \ref{sec:preliminaries}
provides the necessary preliminaries and describes the problem formulation. In
Section \ref{sec:stochastic-approx}, we establish sufficient conditions for the
bounded disturbance sets such that the disturbance minimal
and maximal reach sets are a subset and superset of the stochastic reach set,
respectively. Section 
\ref{sec:lagrangian-methods-recursion}  
details the recursions for computing the disturbance minimal and disturbance maximal reach sets,
given a single sufficient bounded disturbance set.
Section \ref{sec:algorithm_and_challenges} describes how
these approximations can be improved by using multiple bounded disturbance
sets. We examine numerical methods to obtain these sufficient bounded disturbance
sets in Section \ref{sec:computation_of_boundeddist_boundeddistover} and examine
the numerical implementation challenges for computing the disturbance
minimal and maximal reach sets in Section \ref{sec:computational_challenges}. 
Finally, we demonstrate our algorithm on
selected examples in Section \ref{sec:examples} and provide conclusions and
directions of future work in Section \ref{sec:conclusion}.

\section{Preliminaries}
\label{sec:preliminaries}

The following notation will be used throughout the paper. We denote the set of
natural numbers, including zero, as $\naturalnums$, and discrete-time intervals
with $\dtinterval{a}{b} = \naturalnums \cap \{ a, a+1, \dots, b-1, b\}$, for
$a, b \in \naturalnums$, $a \leq b$. We will primarily use $k \in \naturalnums$
as our discrete-time index but will also use $t \in \naturalnums$ as necessary
to provide clarity. The transposition of a vector is $\transpose{x}$, and the
concatenation of a discrete-time series of vectors is noted with a bar above
the variable and a subscript with the indices, i.e. $\bar{x}_{[k,N]} =
\transpose{[\transpose{x_{k}}, \transpose{x_{k+1}}, \dots,
\transpose{x_{N}}]}$, $x_{t} \in \realnums^{n}$ for $t \in \dtinterval{k}{N}$.
The $n$-dimensional identity matrix is noted as $I_{n}$.

The Minkowski summation of two sets $ \mathcal{S}_1, \mathcal{S}_2 \subseteq
\realnums^{n}$ is $ \mathcal{S}_1 \oplus \mathcal{S}_2=\{s_1+s_2:s_1\in
\mathcal{S}_1, s_2 \in \mathcal{S}_2\}$; the Minkowski difference (or
Pontryagin difference) of two sets $ \mathcal{S}_1, \mathcal{S}_2$ is $
\mathcal{S}_2 \minkdiff \mathcal{S}_1=\{s:s+s_1\in \mathcal{S}_2\ \forall s_1
\in \mathcal{S}_1\}$. For $\statespace\subseteq \realnums^n, n\in \mathbb{N},
n>0$, the indicator function corresponding to a set $ \mathcal{S}$ is
$\ONE_{\mathcal{S}}: \statespace \rightarrow \{0,1\}$ where
$\ONE_{\mathcal{S}}(x)=1$ if $x\in \mathcal{S}$ and is zero otherwise; the
Cartesian product of the set $\mathcal{S}$ with itself $k \in \mathbb{N}$ times
is $\mathcal{S}^{k}$.

\subsection{Lower semi-continuity}

Lower semi-continuous (l.s.c.) functions are functions whose sub-level sets are
closed~\cite[Definition 7.13]{BertsekasSOC1978}. Lower semi-continuous
functions have very useful properties with respect to optimization.

\begin{itemize}
    \item[(P1)] \emph{Indicator function of a closed set is \lsc{}}: Given a closed set $S\subseteq \mathcal{X}$, the function $-\ONE_{S}(x)$ is \lsc{}\footnote{We know that $\ONE_{S}(x)$ is upper semi-continuous~\cite[Definition 7.13]{BertsekasSOC1978}, and the negation of an upper semi-continuous function yields a \lsc{} function.}.
    \item[(P2)] \emph{Addition and \lsc{}}: Given two \lsc{} functions $l_1(x),l_2(x): \mathcal{X} \rightarrow \realnums$ such that $\forall x\in \mathcal{X}$, $\vert l_1(x)\vert \neq\infty$ and $\vert l_2(x)\vert \neq \infty$, the function $l_1(x)+l_2(x)$ is \lsc{} over $ \mathcal{X}$~\cite[Ex. 1.39]{rockafellar_variational_2009}. 
    \item[(P3)] \emph{Semicontinuity under composition}: Given a \lsc{} functions $l(x): \mathcal{X} \rightarrow \realnums$ and a continuous function $C(x): \mathcal{Y} \rightarrow \mathcal{X}$, the function $l\circ C=l(C(\cdot)): \mathcal{Y} \rightarrow \realnums$ is \lsc{} over $ \mathcal{Y}\subseteq \realnums^m,\ m\in \mathbb{N},\ m>0$~\cite[Ex. 1.40]{rockafellar_variational_2009}.
    \item[(P4)] \emph{Supremum of a \lsc{} function}: Given $l: \mathcal{X}\times \mathcal{Y} \rightarrow \realnums\cup\{-\infty,\infty\}$ such that $l$ is \lsc{}, then $l^\ast(x)=\sup_{y\in \mathcal{Y}}l(x,y)=-\inf_{y\in \mathcal{Y}}(-l(x,y))$ is \lsc{}~\cite[Prop. 7.32(b)]{BertsekasSOC1978}.
    \item[(P5)] \emph{Infimum of a \lsc{} function}: Given $l: \mathcal{X}\times \mathcal{Y} \rightarrow \realnums\cup\{-\infty,\infty\}$ such that $l$ is \lsc{} and $ \mathcal{Z}\subseteq \mathcal{Y}$ is compact, then $l^\ast(x)=\inf_{y\in \mathcal{Z}}l(x,y)$ is \lsc{}. 
        Additionally, there exists a Borel-measurable $\phi: \mathcal{X} \rightarrow \mathcal{Z}$ such that $l^\ast(x)=l(x,\phi(x)),\ \forall x\in \mathcal{X}$~\cite[Prop. 7.33]{BertsekasSOC1978}.
\end{itemize}


\subsection{System description}

Consider a discrete-time, nonlinear, time-varying dynamical system with an
affine disturbance,
\begin{equation}
  x_{k+1} = f_{k}(x_{k}, u_{k}) + w_{k}
  \label{eq:nonlin}
\end{equation}
with state $x_{k} \in \statespace \subseteq \realnums^{n}$, input $u_{k} \in
\inputspace \subseteq \realnums^{m}$, disturbance $w_{k} \in
\distspace\subseteq \realnums^n$, and a function $f: \statespace\times
\inputspace \rightarrow \statespace$. We denote the origin of $ \realnums^{n}$
as $0_{n}$ and assume $0_{n} \in\distspace$ without loss of generality due to the
affine nature of the disturbance. We also consider the discrete-time LTV system 
\begin{equation}
    x_{k+1} = A_{k} x_{k} + B_{k} u_{k} + w_{k} \label{eq:lin}
\end{equation}
with $A_{k}\in \realnums^{n\times n}$ and $B_{k}\in \realnums^{n\times m}$. We assume
$A_{k}$ is non-singular, which holds true especially for discrete-time systems that
arise from sampling continuous-time systems. We will consider the cases where
$w_{k}$ is uncertain (non-stochastic disturbance drawn from a bounded set) and
stochastic (random vector drawn from a known probability density function). The
discrete horizon length for the reachability of a target tube problem is marked
as $N \in \naturalnums$, $N>0$.

\subsection{Reachability of a target tube}
\label{sub:reachability-of-a-target-tube}

As in \cite{bertsekas1971minimax, bertsekasDP}, we define a \emph{target tube}
$\mathscr{T}_{N} = [\targetset_{0}, \targetset_{1}, \dots, \targetset_{N}]$,
as an indexed collection of subsets of the state space,
$\targetset_{k} \subseteq \statespace$, for all $ k \in \dtinterval{0}{N}$. We
assign attributes of the tube that are typically given to sets, e.g. closed,
bounded, compact, convex, etc., if and only if every set in the target tube has
those properties. For example, we say that the target tube $\targettube_{N}$ is
closed if and only if $\targetset_{k}$ is closed for all $k \in
\dtinterval{0}{N}$.

\begin{figure*}
    \centering
    \includegraphics{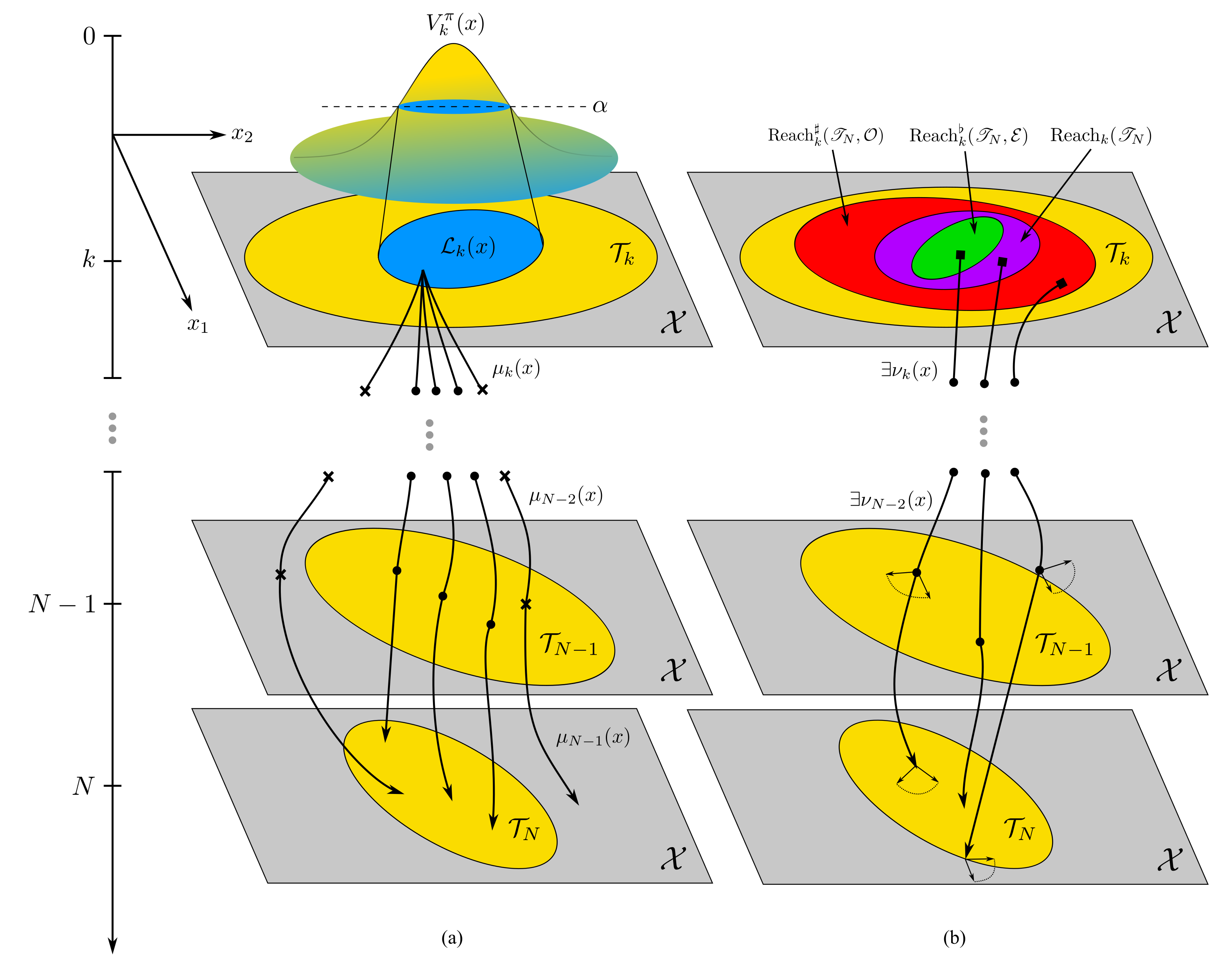}
    \caption{
        Graphical representation of (a) the stochastic $\alpha$-level reach
        set, $\stochlevelset{k}$ and (b) of the reachability of the target 
        tube---$\reach_{k}(\targettube_{N})$, \eqref{eq:reachability-of-target-tube}, 
        the disturbance minimal reach set---$\reachunder{k}$, 
        \eqref{eq:dist-minimal-reach-set}, and the disturbance maximal reach 
        set---$\reachover{k}$, \eqref{eq:dist-maximal-reach-set}. The 
        trajectories with circle markings (\textbullet) remain in the target 
        tube while trajectories with an ex ({\bfseries \texttt x}) fail to 
        remain in the tube. For the stochastic reach set starting from the same
        initial condition, some trajectories will not remain in the target 
        tube due to the random nature of the disturbance. For the disturbance
        minimal and maximal reach reach set the trajectories need to remain
        in the target set for all disturbances in $\boundeddist$, and some
        disturbance in $\boundeddistover$, respectively. The reachability of
        a target tube set, $\reach_{k}(\targettube_{N})$, does not consider a 
        disturbance and, hence, has a deterministic trajectory.
    }
    \label{fig:reach-sets-graphic}

    \makeatletter
    \@bsphack 
    \protected@write \@auxout {}{\string \newlabel {fig:stoch-reach-set}{{\@currentlabel (a)}{\thepage }}}
    \protected@write \@auxout {}{\string \newlabel {fig:uncertain-reach-sets}{{\@currentlabel (b)}{\thepage }}}
    \@esphack
    \makeatother
\end{figure*}

We will denote an admissible state-feedback law using $\nu: \statespace
\rightarrow \inputspace$, the set of admissible state-feedback laws with $
\mathcal{F}$, and the set of admissible control policies $\rho = [\nu_0(\cdot),
\nu_1(\cdot), \dots, \nu_{N-1}(\cdot)]$ using $\mathcal{P} = \mathcal{F}^N$.

The reachability of a target tube problem is concerned with determining the 
set of states at time $k \in \naturalnums$ for which there exists a control 
policy, $\rho$, such that, for the system \eqref{eq:nonlin} with no disturbance,
i.e. $\mathcal{W} = \{0_{n}\}$, the trajectory
$[x_{k}, \dots, x_{N}]$ will remain in the target tube. Formally, 
\begin{align}
    \reach_{k}(\targettube_{N}) = \big\{ x_{k} \in \targetset_{k} : \; &\forall t \in \dtinterval{k}{N-1}, \exists \nu_{t} \in \mathcal{F}, \nonumber\\
        &x_{t+1} \in \targetset_{t+1} \big\}
    \label{eq:reachability-of-target-tube}
\end{align}
This set can be seen graphically in Figure \ref{fig:uncertain-reach-sets}.

The motivation for using target tubes is to allow for more versatility in the
problem definition. We can unite the standard terminal-time reach-avoid problem---in which we
desire to reach a target, $\targetset$, at time $N$ while remaining in a safe
set, $\safeset$, for $k < N$---and the reachability of target tubes with
$\targettube_{N}^{RA} = [\safeset, \safeset, \dots, \safeset, \targetset]$. The
viability problem can equivalently be subsumed with $\targettube_{N}^{V} =
[\safeset, \safeset, \dots, \safeset]$. However many interesting problems fit
outside of the standard terminal-time reach-avoid or viability problems.


\subsection{Stochastic reachability of a target tube}

The basic reachability of a target tube problem
\eqref{eq:reachability-of-target-tube} is described with no disturbance. Here
we consider \eqref{eq:nonlin}, with a stochastic disturbance $w_{k}$. We assume
$w_{k}$ is an $n$-dimensional random vector defined in the probability space
$(\distspace, \sigma(\distspace),\mathbb{P}_{w})$; $\sigma( \distspace)$
denotes the minimal $\sigma$-algebra associated with the random vector $w_{k}$.
We assume the disturbance $w_{k}$ is absolutely continuous with a probability
density function $\psi_{w}$ and the disturbance process
${\{w_{k}\}}_{k=0}^{N-1}$ is an independent and identically distributed
(i.i.d.) random process. 

We will denote an admissible universally-measurable state-feedback law as $\mu:
\statespace \rightarrow \inputspace$  and the set of Markov control
policies
$\pi=[\mu_0(\cdot), \mu_1(\cdot), \dots, \mu_{N-1}(\cdot)]$ as $ \mathcal{M}$.
Since no measurability restrictions were imposed on the admissible feedback laws
in Section \ref{sub:reachability-of-a-target-tube},
$\mathcal{M}\subset \mathcal{P}$. 
Given a Markov policy $\pi$ and initial state
$x_0\in \statespace$, the concatenated state vector $\bar{x}_{[1,N]} =
\transpose{[\transpose{x_{1}}, \dots, \transpose{x_{N}}]}$ for the system
\eqref{eq:nonlin} is a random vector defined in the probability space
$(\statespace^{N},\sigma(\statespace^{N}), \probxpi{0})$. The probability
measure $\probxpi{0}$ is induced from the sequence of random variables
$\bar{w}_{[0,N-1]}$ defined on the probability space $(\distspace^{N},
\sigma(\distspace^{N}), \probdist{0})$, as seen in~\cite[Sec.
2]{summers2010_verification}.
For $k\in \mathbb{N}_{[0,N-1]}$, we denote the probability space associated
with the random vector $\bar{x}_{[k+1,N]}$ as $(\statespace^{N-k},
\sigma(\statespace^{N-k}), \probxpi{k})$.

For stochastic reachability analysis, we are interested in the maximum
likelihood that the system \eqref{eq:nonlin} starting at an initial state
$x_0\in \statespace$ will stay within the target tube using a Markov policy.
The maximum likelihood and the optimal Markov policy can be determined as the
solution to the optimization problem, \cite[Sec. 4]{summers2010_verification}
\begin{equation} \label{eq:ra-value-fcn}
  \sup_{\pi\in \mathcal{M}} \mathbb{E}^{N,\pi}_{\bar{x}}\left[
    \prod_{i=0}^{N} \ONE_{\targetset_{i}}(x_{i}) \right].
\end{equation}
Let the optimal solution to problem \eqref{eq:ra-value-fcn} be
$\pi^\ast=[\mu_0^\ast(\cdot)\ \ldots\ \mu_{N-1}^\ast(\cdot)]$, the
\emph{maximal Markov policy in the terminal sense}~\cite[Def.
10]{summers2010_verification}. A dynamic programming approach was presented
in~\cite{summers2010_verification} to solve problem
\eqref{eq:ra-value-fcn}, along with sufficient conditions for the existence of
a maximal Markov policy. This approach computes value functions $V_{k}^{\ast}:
\statespace \rightarrow [0,1]$ for $k\in[0,N]$,
\begin{subequations}
    \begin{align}
        V_{N}^{\ast}(x) &= \ONE_{ \mathcal{T}_{N}}(x) \label{eq:valFunN} \\
        V_{k}^{\ast}(x)&= \ONE_{\targetset_{k}}(x)\int_{ \statespace} V_{k+1}^{\ast}(y)Q(dy|x, u) \nonumber\\
            &=\ONE_{ \targetset_{k}}(x)\probxpistar{k}\left(\bigcap_{t=k+1}^N \{x_{t} \in \targetset_{t}\} \Big\vert x_{k} = x \right)
            \label{eq:valFun}
    \end{align}
    \label{eq:dyn_recurs}
\end{subequations}
\kern-3pt where $Q(\cdot|x,u)$ is the one-step transition kernel
\cite{summers2010_verification}. For notational convenience we will often
simplify the $x_{k} = x$ condition in probability statements and simply write
$\probxpistar{k}\left(\cdot \vert x_{k} = x \right)$ as
$\probxpistar{k}\left(\cdot \vert x \right)$.

\begin{figure*}[t!]
    \makeatletter
    \newcount\tmpcount
    \tmpcount=9
    \advance\tmpcount by -\c@equation
    \global\advance\c@equation by \tmpcount 
    \makeatother
    \begin{align}
        V_{k}^\ast(x)
        &= \probxpistar{k}\Bigg(\bigcap_{t=k+1}^N \{x_{t} \in \targetset_{t}\} \Big\vert x \Bigg)\ONE_{ \targetset_{k}}(x)  \nonumber \\
        &= \probxpistar{k}\Bigg(\bigcap_{t=k+1}^N \{x_{t} \in \targetset_{t}\} \Big\vert x, \bar{w}_{[k,N-1]} \in \boundeddist^{N-k} \Bigg)  \probdist{k}(\bar{w}_{[k,N-1]} \in \boundeddist^{N-k}) \nonumber \\
        &\qquad+\probxpistar{k}\Bigg(\bigcap_{t=k+1}^N \{x_{t} \in \targetset_{t}\} \Big\vert x, \bar{w}_{[k,N-1]} \in (\mathcal{W}^{N-k}\setminus\boundeddist^{N-k}) \Bigg)  \probdist{k}(\bar{w}_{[k,N-1]} \in
        (\mathcal{W}^{N-k}\setminus\boundeddist^{N-k}))\label{eq:total-prob-vf-expansion}\\
        &\geq  \probxpistar{k}\Bigg(\bigcap_{t=k+1}^N \{x_{t} \in \targetset_{t}\} \Big\vert x, \bar{w}_{[k,N-1]} \in \boundeddist^{N-k} \Bigg)  \probdist{k}(\bar{w}_{[k,N-1]}\in \boundeddist^{N-k}).\label{eq:lower-bound-vf} 
    \end{align}
    \hrulefill
    \makeatletter 
    \advance\tmpcount by 2
    \global\advance\c@equation by -\tmpcount 
    \makeatother
\end{figure*}

By definition, the optimal value function $V_{0}^{\ast}(x_0)$ provides the
maximum likelihood of ensuring that the system \eqref{eq:nonlin} stays
within the target tube $\targettube_{N}$ when initialized to the initial state
$x_0\in \statespace$. Note that the problem discussed
in~\cite{summers2010_verification} was specifically a reach-avoid problem, but
it may be easily extended to the more general case discussed here yielding the
recursion \eqref{eq:dyn_recurs}. 

For $\alpha\in[0,1]$ and $k\in
\mathbb{N}_{[0,N]}$, the \emph{stochastic $\alpha$-level reach set},
\begin{align}
    \stochlevelset{k} &= \Bigg\{ y\in\targetset_{k} : \nonumber \\
    &\hskip-1cm\sup_{\pi\in \mathcal{M}}\probxpi{k}\Bigg( \bigcap_{t=k+1}^N \{\bar{x}_{t} \in \targetset_{t}\} \; \Big\vert \; y \Bigg) \geq \alpha \Bigg\} \nonumber,\\
&= \left\{ y\in\targetset_{k} : V_{k}^{\ast}(x) \geq
\alpha \right\} \label{eq:sra-set} 
\end{align}
is the set of states $x$ that achieve the reachability of a target tube
objective with a minimum probability $\alpha$. This set is shown graphically
in Figure \ref{fig:stoch-reach-set}. For brevity we will often refer to the
stochastic $\alpha$-level reach set simply as the {\itshape stochastic reach
set}.

\subsection{Disturbance minimal and maximal reachability of a target tube}

Now, consider the nonlinear system, \eqref{eq:nonlin}, with an uncertain
disturbance $w_{k}$ drawn from a bounded disturbance set. Two types of
phenomena are of interest: the \emph{disturbance minimal reachability of a
target tube} and the \emph{disturbance maximal reachability of a target tube}.
In this subsection, we will treat the disturbance as a \emph{non-stochastic}
uncertainty. For these problems we also consider admissible state-feedback
policies, $\rho \in \mathcal{P}$.

For the disturbance minimal reachability problem, 
we are interested in the existence of a feedback controller $\rho\in
\mathcal{P}$ for which the system \eqref{eq:nonlin} will remain in a target
tube, $\targettube_{N}$, despite the \emph{worst possible} choice of the
disturbance $w_{k} \in \boundeddist$. The set of states $x_{k}$ for $k\in
\mathbb{N}_{[0,N-1]}$ which may be driven by such a control policy is  the
\emph{$k$-time disturbance minimal reach set}
\begin{align}
    \reachunder{k} = \big\{ x_{k} \in \targetset_{k} : \; &\forall t \in \dtinterval{k}{N-1}, \exists \nu_{t} \in \mathcal{F}, \nonumber\\
        &\forall w_{t} \in \boundeddist, x_{t+1} \in \targetset_{t+1} \big\}
    \label{eq:dist-minimal-reach-set}
\end{align}

The problem of disturbance maximal reachability of the target tube concerns the
system \eqref{eq:nonlin} with $w_{k}\in\boundeddistover\subseteq \mathcal{W}$.
Here, we seek a feedback controller $\rho\in \mathcal{P}$ ensures that the
system stays within the target tube, under the \emph{best possible} choice for
the disturbance $w_{k}\in \mathcal{O}$. The set of states $x_{k}$ for $k\in
\mathbb{N}_{[0,N-1]}$ which may be driven by such a control policy is the
\emph{$k$-time disturbance maximal reach set}, 
\begin{align}
    \reachover{k} = \big\{ x_{k} \in \targetset_{k} : \; &\forall t \in \dtinterval{k}{N-1}, \exists \nu_{t} \in \mathcal{F}, \nonumber\\
        &\exists w_{t} \in \boundeddistover, x_{t+1} \in \targetset_{t+1} \big\}
    \label{eq:dist-maximal-reach-set}.
\end{align}
The disturbance minimal and maximal reach sets, \eqref{eq:dist-minimal-reach-set}
and \eqref{eq:dist-maximal-reach-set}, can be seen graphically in 
Figure \ref{fig:uncertain-reach-sets}.

For brevity, when the discrete-time index $k$ is apparent, we will refer to the
$k$-time disturbance minimal and maximal reach sets more succinctly as the
\emph{disturbance minimal} and \emph{disturbance maximal reach sets}. Note that the
\emph{disturbance minimal reach set} is equivalent to the well studied in reachability of
target tube problem~\cite{bertsekas1971minimax,bertsekasDP}, and is also known
as the \emph{robust controllable set} in model predictive control community
\cite{borelli_2017}. 


\subsection{Problem statements}

This paper utilizes disturbance minimal and maximal reach sets to efficiently
and scalably approximate the stochastic $\alpha$-level reach set. We achieve
this by addressing the following problems:
\begin{problem}
     Given a value $\alpha\in[0,1]$, characterize $\boundeddist,
     \boundeddistover \subseteq \distspace$ whose corresponding disturbance 
     minimal and maximal reach sets respectively under and overapproximate
     the stochastic effective $\alpha$-level set \eqref{eq:sra-set},
     \emph{i.e.}, find $\boundeddist, \boundeddistover \subseteq \distspace$ 
     such that $\reachunder{0} \subseteq
     \mathcal{L}_{0}(\targettube_{N},\alpha)\subseteq
     \reachover{0}$.
     \label{problem:under-and-over-approx}
\end{problem}
\begin{customthm}{1.a}
    Characterize the sufficient conditions for which the optimal control policy
    corresponding to the disturbance minimal and maximal reach sets is a Markov 
    control policy for the system \eqref{eq:nonlin}.
    \label{problem:optimal-policy-is-markov}
\end{customthm}
Next, we discuss convex optimization-based techniques to construct $
\boundeddist$ and $ \boundeddistover$.
\begin{problem}
    Propose computationally efficient methods to compute $ \boundeddist,
    \boundeddistover\subseteq \mathcal{W}$ that satisfy
    Problem~\ref{problem:under-and-over-approx}.\label{prob:boundeddist}
\end{problem}
We will also discuss the min-max and min-min problems that may be used to
obtain these sets~\cite[Sec. 4.6.2]{bertsekasDP}.
\begin{problem}
    Construct Lagrangian-based recursions for the exact computation of the
    $k$-time disturbance minimal and disturbance maximal reach sets for the system \eqref{eq:nonlin}.
    \label{problem:lagrangian-recursions}
\end{problem}
\begin{problem}
    Improve the approximations obtained via Problem~\ref{problem:under-and-over-approx} by using
    multiple disturbance subsets.
    \label{problem:multiple-disturbance-sets}
\end{problem}

\section{Lagrangian approximations for the stochastic effective level sets}
\label{sec:stochastic-approx}

In this section we will details how disturbance minimal and maximal reach sets
are used to approximate stochastic reach sets. We will first
detail the conditions upon $\boundeddist$ which allow for the disturbance minimal
reach set to be a conservative underapproximation, and then will detail the
conditions for $\boundeddistover$ that ensures that the disturbance maximal reach
set will provide an overapproximation. The theory in this section will
require the conditions stated in the following assumption.

\begin{assumption}
    The target tube $\targettube_{N}$ is closed, the input space $\inputspace$
    is compact, and $f$ is continuous in $\statespace \times
    \inputspace$.\label{assum:exist}
\end{assumption}
\subsection{Underapproximation of the stochastic $\alpha$-level reach set}

We will first address Problem~\ref{problem:optimal-policy-is-markov} via Theorem~\ref{thm:meas}.

\begin{theorem}
    Under Assumption~\ref{assum:exist}, there exists an optimal Markov policy
    $\pi^\ast\in \mathcal{M}\subset \mathcal{P}$ associated with the set
    $\reachunder{0}$.
    \label{thm:meas}
\end{theorem}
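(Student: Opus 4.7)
The plan is a backward induction on $k$ that simultaneously establishes closedness of $\reachunder{k}$ and extracts a Borel-measurable selector for the optimal input at each stage, by applying the semicontinuity tools (P1)--(P5) to a dynamic-programming reformulation of \eqref{eq:dist-minimal-reach-set}.

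First, I would recast $\reachunder{k}$ as the set-valued backward recursion $\reachunder{N} = \targetset_N$ and, for $k \in \dtinterval{0}{N-1}$,
\[
\reachunder{k} = \targetset_k \cap \{x \in \statespace : \exists u \in \inputspace,\ \forall w \in \boundeddist,\ f_k(x,u) + w \in \reachunder{k+1}\},
\]
which is equivalent to \eqref{eq:dist-minimal-reach-set} by the usual unrolling of the closed-loop quantifiers. The base case is immediate: $\reachunder{N} = \targetset_N$ is closed by Assumption~\ref{assum:exist}.

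For the inductive step, assume $\reachunder{k+1}$ is closed. Define $l_{k+1}(y) \triangleq -\ONE_{\reachunder{k+1}}(y)$; by (P1) this is l.s.c. Since $(x,u,w) \mapsto f_k(x,u) + w$ is continuous (Assumption~\ref{assum:exist} plus the affine disturbance structure), (P3) yields that $l_{k+1}(f_k(x,u)+w)$ is l.s.c. jointly in $(x,u,w)$. Applying (P4), $\phi_k(x,u) \triangleq \sup_{w \in \boundeddist} l_{k+1}(f_k(x,u)+w)$ is l.s.c. in $(x,u)$; since $\phi_k$ is $\{-1,0\}$-valued, the condition $\phi_k(x,u) = -1$ is equivalent to $f_k(x,u)+w \in \reachunder{k+1}$ for every $w \in \boundeddist$. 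Next, because $\inputspace$ is compact, (P5) guarantees that $\Phi_k(x) \triangleq \inf_{u \in \inputspace} \phi_k(x,u)$ is l.s.c. and that there exists a Borel-measurable $\mu_k^\ast: \statespace \to \inputspace$ with $\Phi_k(x) = \phi_k(x, \mu_k^\ast(x))$. Hence $\reachunder{k} = \targetset_k \cap \{x : \Phi_k(x) \leq -1\}$ is the intersection of two closed sets (a sub-level set of an l.s.c. function intersected with the closed $\targetset_k$), completing the induction.

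The concatenation $\pi^\ast = [\mu_0^\ast(\cdot), \mu_1^\ast(\cdot), \ldots, \mu_{N-1}^\ast(\cdot)]$ is then a Markov policy in $\mathcal{M}$, since each $\mu_k^\ast$ is Borel-measurable and hence universally measurable, and by construction $\pi^\ast$ keeps the trajectory in $\targettube_N$ for every initial state $x_0 \in \reachunder{0}$ and every disturbance realization in $\boundeddist^N$. The main obstacle is aligning the semicontinuity directions: the natural ``$\sup_u \inf_w$-of-indicator'' formulation does not directly fit (P1)--(P5), so one must negate the value function so that ``$\forall w$'' becomes a supremum of l.s.c. functions (matching (P4)) and ``$\exists u$'' becomes an infimum of l.s.c. functions over a compact set (matching (P5), which simultaneously supplies the Borel-measurable selector).
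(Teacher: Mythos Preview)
Your proof is correct and follows essentially the same strategy as the paper: a backward induction that establishes lower semicontinuity at each stage via properties (P1)--(P5) and extracts a Borel-measurable (hence universally measurable) selector through (P5). The only difference is cosmetic: the paper carries out the induction on the integer-valued min-max cost $J_k^\ast(x)=\inf_{u}\sup_{w}[J_{k+1}^\ast(f_k(x,u)+w)+g_k(x)]$ with $g_k=1-\ONE_{\targetset_k}$ (so that $\reachunder{0}=\{x:J_0^\ast(x)=0\}$), whereas you work directly with the $\{-1,0\}$-valued function $-\ONE_{\reachunder{k+1}}$ and the set recursion $\reachunder{k}=\targetset_k\cap\{x:\exists u,\forall w,\,f_k(x,u)+w\in\reachunder{k+1}\}$. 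Your version is slightly leaner but presupposes that recursion (Theorem~\ref{thm:robust-effective-target-recursion} in the paper), while the paper's counting-cost formulation is self-contained; in both cases the semicontinuity bookkeeping and the appeal to (P5) for the selector are identical.
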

The proof for Theorem \ref{thm:meas} is deferred to Section
\ref{sub:proof-meas}. The guarantee of the existence of an optimal Markov
policy for the robust effective target set allows for the definition of the
probability measure $\probxpi{k}(\cdot|y)$, which is essential for
demonstrating the conditions upon $\boundeddist$ that will make the disturbance
minimal reach set a conservative approximation of the stochastic reach
set, as will be shown in the following Proposition and Theorem.
\begin{proposition}\label{prop:Probab1}
    Under Assumption~\ref{assum:exist}, for every $k\in \dtinterval{0}{N-1}$,
    if $y \in \reachunder{k}$, then
    \begin{align}
        \probxpistar{k}\Bigg(\bigcap_{t=k+1}^N \{x_{t} \in \targetset_{t}\} \; \Big\vert \;  & y, \bar{w}_{[k,N-1]}\in \boundeddist^{N-k}\Bigg) = 1 \label{eq:prob-forall-equals-1}.
    \end{align}
    \label{prop:Prop1}
\end{proposition}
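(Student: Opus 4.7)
The approach is to translate the deterministic \emph{for-all-disturbances} guarantee in the definition \eqref{eq:dist-minimal-reach-set} of $\reachunder{k}$ into a conditional probability-one statement. Fix $y\in\reachunder{k}$; by \eqref{eq:dist-minimal-reach-set} there exist feedback laws $\nu_k,\ldots,\nu_{N-1}\in\mathcal{F}$ such that, starting from $x_k=y$, the recursion $x_{t+1}=f_t(x_t,\nu_t(x_t))+w_t$ delivers $x_{t+1}\in\targetset_{t+1}$ for every $t\in\dtinterval{k}{N-1}$ and every $(w_k,\ldots,w_{N-1})\in\boundeddist^{N-k}$. Theorem~\ref{thm:meas} (whose proof argument applies equally to $\reachunder{k}$) further permits these laws to be chosen universally measurable, producing a Markov policy $\pi^\ast=(\mu_k^\ast,\ldots,\mu_{N-1}^\ast)\in\mathcal{M}$ with the same $\boundeddist$-robustness, and thereby making $\probxpistar{k}(\cdot\mid y)$ a well-defined probability measure.

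Next I unpack the conditional probability. Under $\pi^\ast$, the trajectory $\bar x_{[k+1,N]}$ is a universally measurable function $\Phi$ of $(y,\bar w_{[k,N-1]})$, since measurability is preserved under composition of the universally measurable $\mu_t^\ast$ with the continuous $f_t$. Hence
\begin{equation*}
    \probxpistar{k}\!\left(\bigcap_{t=k+1}^N\{x_t\in\targetset_t\}\,\Big|\,y,\bar w_{[k,N-1]}\in\boundeddist^{N-k}\right) = \frac{\probdist{k}\bigl(\{\Phi(y,\bar w)\in\targetset_{k+1}\times\cdots\times\targetset_N\}\cap\{\bar w\in\boundeddist^{N-k}\}\bigr)}{\probdist{k}(\bar w\in\boundeddist^{N-k})}.
\end{equation*}
By the robustness property extracted above, $\Phi(y,\bar w)\in\targetset_{k+1}\times\cdots\times\targetset_N$ for \emph{every} $\bar w\in\boundeddist^{N-k}$, so the numerator equals the denominator and the ratio is $1$.

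The main obstacle is invoking Theorem~\ref{thm:meas} to secure a Markov and universally measurable policy achieving the $\boundeddist$-robust objective from $y\in\reachunder{k}$---without this, $\probxpistar{k}$ is not even defined. Once this is in hand, the calculation reduces to the observation that the conditioning event itself guarantees the trajectory lies in the tube, which is essentially bookkeeping. A tacit assumption is $\probdist{k}(\bar w\in\boundeddist^{N-k})>0$ so that the conditioning is well-posed; this will be arranged when $\boundeddist$ is constructed later in the paper.
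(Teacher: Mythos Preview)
Your proof is correct and follows the same approach as the paper: invoke Theorem~\ref{thm:meas} to guarantee that the optimal policy $\pi^\ast$ is Markov (so that the measure $\probxpistar{k}$ is well-defined), and then observe that the definition \eqref{eq:dist-minimal-reach-set} of $\reachunder{k}$ forces the trajectory to stay in the tube for every disturbance sequence in $\boundeddist^{N-k}$, making the conditional probability equal to one. The paper's own proof is just these two sentences; you have simply filled in the details---writing the conditional probability as a ratio, tracking measurability of the trajectory map $\Phi$, and flagging the positivity of $\probdist{k}(\bar w\in\boundeddist^{N-k})$---none of which the paper makes explicit.
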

\begin{proof}
    Theorem \ref{thm:meas} ensures that the probability measure on the left-hand
    side of \eqref{eq:prob-forall-equals-1} exists. The equality is thus ensured
    by the definition of the disturbance minimal reach set, \eqref{eq:dist-minimal-reach-set}.
\end{proof}

\begin{theorem} \label{lem:bounded-conservative}
    Under Assumption~\ref{assum:exist}, for some $\alpha\in[0,1]$, $k \in
    \dtinterval{0}{N-1}$, and $\boundeddist \subseteq \mathcal{W}$ such that
    for all $t \in \dtinterval{k}{N-1}$, $\mathbb{P}_{w}(w_{t} \in \boundeddist)
    = {\alpha}^{\frac{1}{N-k}}$, then $\reachunder{k}
    \subseteq \mathcal{L}_{k}(\alpha)$.\label{thm:underapprox}
\end{theorem}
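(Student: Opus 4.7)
The plan is to leverage the total probability decomposition already displayed in the excerpt (equations \eqref{eq:total-prob-vf-expansion}--\eqref{eq:lower-bound-vf}) together with Proposition~\ref{prop:Prop1} and the i.i.d.\ structure of the disturbance process. The chain of reasoning is short: fix $y\in \reachunder{k}$, bound the value function $V_k^\ast(y)$ below by the product of (i) the conditional probability of remaining in the target tube given that every disturbance realization lies in $\boundeddist$, and (ii) the probability that every disturbance realization lies in $\boundeddist$; then show factor (i) equals $1$ and factor (ii) equals $\alpha$, so $V_k^\ast(y)\ge \alpha$, which is exactly the membership condition for $\stochlevelset{k}$.

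More concretely, I would begin by observing that $\reachunder{k}\subseteq \targetset_k$ by definition, so $\ONE_{\targetset_k}(y)=1$ and the indicator factor in \eqref{eq:valFun} can be dropped. Theorem~\ref{thm:meas} guarantees that the optimal policy $\pi^\ast$ associated with $\reachunder{k}$ is a Markov policy, so the probability measure $\probxpistar{k}(\,\cdot\,|\,y)$ in \eqref{eq:valFun} is well-defined, and using it with the total probability decomposition in \eqref{eq:total-prob-vf-expansion} yields the lower bound \eqref{eq:lower-bound-vf}:
\begin{align*}
V_k^\ast(y) \;\ge\; &\probxpistar{k}\Bigl(\textstyle\bigcap_{t=k+1}^N\{x_t\in\targetset_t\}\,\Big|\, y,\bar w_{[k,N-1]}\in\boundeddist^{N-k}\Bigr)\\
&\quad\times\probdist{k}\bigl(\bar w_{[k,N-1]}\in\boundeddist^{N-k}\bigr).
\end{align*}

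Next, apply Proposition~\ref{prop:Prop1} with the same $y\in\reachunder{k}$ to conclude that the conditional probability above equals $1$. Finally, invoke the i.i.d.\ hypothesis on the disturbance process together with the assumed mass $\mathbb{P}_w(w_t\in\boundeddist)=\alpha^{1/(N-k)}$ for each $t\in\dtinterval{k}{N-1}$ to compute
\[
\probdist{k}\bigl(\bar w_{[k,N-1]}\in\boundeddist^{N-k}\bigr)=\prod_{t=k}^{N-1}\mathbb{P}_w(w_t\in\boundeddist)=\bigl(\alpha^{1/(N-k)}\bigr)^{N-k}=\alpha.
\]
Combining these three facts gives $V_k^\ast(y)\ge\alpha$, so $y\in\stochlevelset{k}$ by \eqref{eq:sra-set}, proving the desired inclusion.

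There is essentially no hard step here: the conceptual work has already been front-loaded into Theorem~\ref{thm:meas} (existence of a Markov optimizer, needed to make the probability measure meaningful) and Proposition~\ref{prop:Prop1} (which pins factor (i) to $1$). The one place where care is required is making sure the bound \eqref{eq:lower-bound-vf} is applied with the policy $\pi^\ast$ from Theorem~\ref{thm:meas} rather than an arbitrary policy—so that the conditional probability is exactly the one to which Proposition~\ref{prop:Prop1} applies—and verifying that the i.i.d.\ product identity for $\probdist{k}(\boundeddist^{N-k})$ uses the horizon length $N-k$ consistent with the starting index $k$, which is exactly why the exponent $\tfrac{1}{N-k}$ appears in the hypothesis.
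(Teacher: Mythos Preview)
Your proposal is correct and follows essentially the same route as the paper's own proof: use Theorem~\ref{thm:meas} to make the measure $\probxpistar{k}$ meaningful, apply the total-probability decomposition \eqref{eq:total-prob-vf-expansion}--\eqref{eq:lower-bound-vf}, invoke Proposition~\ref{prop:Prop1} to set the conditional factor to $1$, and use the i.i.d.\ assumption to collapse the product to $\alpha$. If anything, your explicit remark that the lower bound must be taken with the Markov policy furnished by Theorem~\ref{thm:meas} (so that Proposition~\ref{prop:Prop1} applies verbatim) is a clarification the paper leaves implicit in its overloaded use of $\pi^\ast$.
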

\begin{proof}
    We are interested in underapproximating $\stochlevelset{k}$ as defined
    in~\eqref{eq:sra-set}. If $x \in \reachunder{k}$, then Equation
    \ref{eq:total-prob-vf-expansion} follows from \eqref{eq:valFun} by Theorem \ref{thm:meas},
    the law of total probability, and the definition of the robust effective
    target set \eqref{eq:dist-minimal-reach-set}---which implies that
    $\ONE_{\targetset_{k}}(x) = 1$.
    Equation \eqref{eq:lower-bound-vf} follows from \eqref{eq:total-prob-vf-expansion} after
    ignoring the second term (which is non-negative). Simplifying
    \eqref{eq:lower-bound-vf} using Proposition~\ref{prop:Prop1} and the i.i.d.
    assumption of the disturbance process, we obtain
    \makeatletter \advance\c@equation by 2 \makeatother
    \begin{align}
        V_{k}^\ast(x) &\geq \probdist{k}(\bar{w}_{[k,N-1]} \in \boundeddist^{N-k}) \nonumber \\
        &=\left(\mathbb{P}_{w}(w_{t} \in \boundeddist)\right)^{N-k} =
        \alpha. \label{eq:approx}
    \end{align}
    Thus, if $x \in \reachunder{k}$ then $x \in \stochlevelset{k}$ by
    \eqref{eq:sra-set}, implying $\reachunder{k} \subseteq \stochlevelset{k}$.
\end{proof}

Theorem~\ref{lem:bounded-conservative} characterizes conditions upon
$\boundeddist$ such that $\reachunder{k}$ will be a conservative
underapproximation of $\stochlevelset{k}$. This will allow for fast
underapproximations of $\stochlevelset{k}$ to be determined through Lagrangian
computation of the robust effective target set. These methods will be further
detailed in Section
\ref{sec:lagrangian-methods-recursion}.

\begin{figure*}
    \makeatletter
    \newcount\tmpcount
    \tmpcount=14
    \advance\tmpcount by -\c@equation
    \global\advance\c@equation by \tmpcount 
    \makeatother
    \begin{align}
        \probxpi{k}\left( \bigcup_{t=k+1}^N \{x_{t} \not\in \targetset_{t}\}\Big\vert y \right) &= \probxpi{k}\left( \bigcup_{t=k+1}^N \{x_{t} \not\in \targetset_{t}\}\Big\vert y, \bar{w}_{[k,N-1]} \in \boundeddistover^{N-k}\right) \probdist{k}( \bar{w}_{[k,N-1]}\in \boundeddistover^{N-k}) \nonumber \\
        &\hskip1cm+ \probxpi{k}\left( \bigcup_{t=k+1}^N \{x_{t} \not\in \targetset_{t}\}\Big\vert y, \bar{w}_{[k,N-1]} \not\in \boundeddistover^{N-k}\right) \probdist{k}(\bar{w}_{[k,N-1]} \not\in \boundeddistover^{N-k}) \nonumber \\ &\geq\probxpi{k}\left( \bigcup_{t=k+1}^N \{x_{t} \not\in \targetset_{t}\}\Big\vert y, \bar{w}_{[k,N-1]} \in \boundeddistover^{N-k} \right) \probdist{k}( \bar{w}_{[k,N-1]}\in \boundeddistover^{N-k}). \label{eq:reach-over-lower-bound}
    \end{align}
    \hrulefill
    \makeatletter 
    \advance\tmpcount by 1
    \global\advance\c@equation by -\tmpcount 
    \makeatother
    \hrulefill
\end{figure*}

\subsection{Overapproximation of the stochastic $\alpha$-level reach set}

We now move to the less common analysis of the overapproximation of the
stochastic $\alpha$-level reach set.

\begin{theorem}
    Under Assumption~\ref{assum:exist}, $\alpha\in[0,1]$, some  $k \in
    \dtinterval{0}{N}$, and $\boundeddistover \subseteq \mathcal{W}$ such that
    for all $t \in \dtinterval{0}{N-1}$, $\mathbb{P}_{w}(w_{t} \in
    \boundeddistover) = {(1-\alpha)}^{\frac{1}{N-k}}$, then 
    $\reachover{k} \subseteq
    \mathcal{L}_{k}(\alpha)$.\label{thm:overapprox}
\end{theorem}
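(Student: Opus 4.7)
As stated, Theorem~\ref{thm:overapprox} reads $\reachover{k} \subseteq \stochlevelset{k}$, but this is the opposite of the overapproximation direction demanded by Problem~\ref{problem:under-and-over-approx} and evidently set up by the displayed inequality~\eqref{eq:reach-over-lower-bound}. I therefore sketch the proof of the intended inclusion $\stochlevelset{k} \subseteq \reachover{k}$, which mirrors Theorem~\ref{thm:underapprox}.

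The plan is a contrapositive. Fix $y \in \targetset_k \setminus \reachover{k}$ and let $\pi^\ast \in \mathcal{M}$ be the optimal Markov policy for the stochastic problem (existence addressed in the next paragraph). By the definition~\eqref{eq:dist-maximal-reach-set} of $\reachover{k}$, the hypothesis $y \notin \reachover{k}$ says that under \emph{every} admissible feedback law and \emph{every} realization $\bar{w}_{[k,N-1]} \in \boundeddistover^{N-k}$ the trajectory exits the target tube at some step; in particular this holds for $\pi^\ast$, so
$$\probxpistar{k}\Bigg(\bigcap_{t=k+1}^{N}\{x_t \in \targetset_t\} \;\Big\vert\; y,\; \bar{w}_{[k,N-1]} \in \boundeddistover^{N-k}\Bigg) = 0,$$
and the corresponding conditional failure probability equals $1$. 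Substituting into~\eqref{eq:reach-over-lower-bound} specialized to $\pi=\pi^\ast$, and using the i.i.d. assumption together with $\mathbb{P}_w(w_t \in \boundeddistover) = (1-\alpha)^{1/(N-k)}$, we get $\probdist{k}(\bar{w}_{[k,N-1]} \in \boundeddistover^{N-k}) = 1-\alpha$, so the aggregate failure probability under $\pi^\ast$ is at least $1-\alpha$. Complementing gives $V_k^\ast(y) \leq \alpha$, hence $y \notin \stochlevelset{k}$ modulo the boundary convention.

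The main obstacles are twofold. First is the measurability/existence analog of Theorem~\ref{thm:meas} and Proposition~\ref{prop:Prop1} on the maximal side: one needs the conditional probability $\probxpistar{k}(\,\cdot\,\vert\, y, \bar{w} \in \boundeddistover^{N-k})$ to be well-defined under a universally measurable $\pi^\ast$, which should follow from the semi-continuity machinery~(P2)--(P4) given Assumption~\ref{assum:exist} and mild regularity of $\boundeddistover$, but because the relevant recursion now involves a supremum over $\boundeddistover$ rather than an infimum, property~(P4) replaces~(P5) as the relevant tool and care is needed in verifying closure preservation. Second, the argument only yields $V_k^\ast(y) \leq \alpha$ rather than the strict $V_k^\ast(y) < \alpha$ needed to conclude $y \notin \{V_k^\ast \geq \alpha\}$; this boundary gap can be closed either by strengthening the hypothesis to a strict inequality on $\mathbb{P}_w(w_t \in \boundeddistover)$, or by reinterpreting $\stochlevelset{k}$ via a strict super-level set of the value function.
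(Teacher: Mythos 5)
Your argument follows essentially the same route as the paper's: both proceed by contraposition, characterize $\statespace\setminus\reachover{k}$, observe that conditional on $\bar{w}_{[k,N-1]}\in\boundeddistover^{N-k}$ the failure probability equals one, and then combine the law of total probability with the i.i.d.\ product $\probdist{k}(\bar{w}_{[k,N-1]}\in\boundeddistover^{N-k})=1-\alpha$. Both of your side observations are also correct: the inclusion in the theorem statement is a typo (the paper's own proof establishes $\stochlevelset{k}\subseteq\reachover{k}$, consistent with Theorem~\ref{thm:subset-and-superset}), and the boundary gap is genuinely present in the published proof as well --- it concludes only that the failure probability is $\geq 1-\alpha$ for every policy, while the characterization \eqref{eq:sraSetV} it then invokes requires a strict inequality, which is exactly the $V_k^\ast(y)\leq\alpha$ versus $V_k^\ast(y)<\alpha$ slippage you describe. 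The one place you genuinely diverge, and where the paper's route is cleaner, is your instantiation of an optimal Markov policy $\pi^\ast$: because $y\notin\reachover{k}$ forces the conditional failure event under \emph{every} admissible feedback law, the success probability is bounded by $\alpha$ for \emph{every} $\pi\in\mathcal{M}$, and taking the supremum afterwards yields $V_k^\ast(y)\leq\alpha$ without ever requiring an optimizer to exist. The paper remarks on precisely this after the proof --- the measurability machinery of Theorem~\ref{thm:meas} is needed only on the underapproximation side, where Proposition~\ref{prop:Prop1} must be evaluated at a specific achieving policy --- so what you flag as the ``main obstacle'' (existence and measurability of $\pi^\ast$, and the (P4)/(P5) asymmetry for a sup-over-$\boundeddistover$ recursion) dissolves entirely if you keep the quantification over all policies rather than specializing to the optimizer.
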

\begin{proof}
    After characterizing $\statespace\setminus\stochlevelset{k}$ and
    $\statespace\setminus\reachover{k}$, we
    will show that, $
    y\in\statespace\setminus\reachover{k}
    \Rightarrow y \in \statespace\setminus\mathcal{L}_{k}(\alpha)$, for the
    given $ \boundeddistover$. This implies that
    $\statespace\setminus\reachover{k}
    \subseteq \statespace\setminus\mathcal{L}_{k}(\alpha) \Rightarrow
    \mathcal{L}_{k}(\alpha) \subseteq
    \reachover{k}$, as desired.  

    Using DeMorgan's law, we can write $\statespace \setminus
    \stochlevelset{k}$ as in Equation \ref{eq:sraSetV}. From the definition of
    \eqref{eq:dist-maximal-reach-set},
    \begin{align}
        \statespace\setminus\reachover{k} &= \big\{ x_{k} \in \statespace :
        \exists t \in \dtinterval{k}{N-1}, \forall \nu_{t}\in \mathcal{F}, \nonumber \\
        &\hskip2.1cm  \forall w_{t}\in \boundeddistover, x_{t+1} \not\in
        \targetset_{t+1} \big\}.
    \end{align}
    Hence, given $y\in \statespace\setminus\reachover{k}$, for any admissible control policies ($\rho\in \mathcal{P}$),
    and specifically any Markov control policies ($\pi\in \mathcal{M}$) since $
    \mathcal{M}\subset \mathcal{P}$, 
   \begin{align}
       \probxpi{k}\Bigg( \bigcup_{t=k+1}^N \{\bar{x}_{t} \not\in \targetset_{t}\} \; \Big\vert \; & y, \bar{w}_{[k,N-1]} \in \boundeddistover^{N-k}\Bigg)=1\label{eq:ars-set-probability-one}
   \end{align}
   By the law of total probability, we have \eqref{eq:reach-over-lower-bound}. Since
   $\mathbb{P}_{w}(w_{t} \in \boundeddistover) = (1 - \alpha)^{\frac{1}{N-k}},\
   \forall t\in \mathbb{N}_{[k,N-1]}$, and the disturbance $w_{t}$ is i.i.d.,
   \makeatletter \advance\c@equation by 1 \makeatother  
   \begin{align}
       \probdist{k}( \bar{w}_{[k,N-1]}\in \boundeddistover^{N-k})=1-\alpha\label{eq:probbarw_{o}ver}
   \end{align}
    From \eqref{eq:ars-set-probability-one}, \eqref{eq:reach-over-lower-bound}, and
    \eqref{eq:probbarw_{o}ver}, we conclude that for every $\pi\in \mathcal{M}$,
    $\probxpi{k}\left( \bigcup_{t=k+1}^N \{\bar{x}_{t} \not\in
    \targetset_{t}\}\Big\vert y\right)\geq 1 - \alpha$, implying $ y \in
    \statespace\setminus\mathcal{L}_{k}(\targettube_{N},\alpha)$ by \eqref{eq:sraSetV}.
\end{proof}
\begin{figure*}
    \begin{align}
    \statespace\setminus\stochlevelset{k}
    &=\left\{ y\in\targetset_{k} : \sup_{\pi\in \mathcal{M}} \probxpi{k}\left( \bigcap_{t=k+1}^N \{\bar{x}_{t} \in \targetset_{t}\}\Big\vert y\right) < \alpha \right\} \nonumber \\
    &=\left\{ y\in\targetset_{k} : \sup_{\pi\in \mathcal{M}}\left(1 -\probxpi{k}\left( \bigcup_{t=k+1}^N \{\bar{x}_{t} \in \targetset_{t}\}\Big\vert y \right)\right) < \alpha \right\} \nonumber \\
    &=\left\{y\in\targetset_{k} : 1 -\inf_{\pi\in \mathcal{M}}\probxpi{k}\left( \bigcup_{t=k+1}^N \{\bar{x}_{t} \in \targetset_{t}\}\Big\vert y \right) < \alpha \right\} \nonumber \\
                          &=\left\{ y\in\targetset_{k} : \inf_{\pi\in \mathcal{M}}\probxpi{k}\left( \bigcup_{t=k+1}^N \{\bar{x}_{t} \in \targetset_{t}\}\Big\vert y \right) > 1-\alpha \right\} \nonumber \\
                          &=\left\{ y\in\targetset_{k} : \forall \pi \in \mathcal{M}, \probxpi{k}\left( \bigcup_{t=k+1}^N \{\bar{x}_{t} \in \targetset_{t}\}\Big\vert y \right) > 1-\alpha \right\} \label{eq:sraSetV}.
\end{align}
\hrulefill
\end{figure*}

Note that because of the $\forall \pi \in \mathcal{M}$ in \eqref{eq:sraSetV} we
can be assured that the measure $\probxpi{k}(\cdot|y)$ in
\eqref{eq:ars-set-probability-one} is well defined. This contrasts with Theorem
\ref{thm:underapprox} for which we needed to Theorem \ref{thm:meas} is
necessary to ensure that the probability measure in
\eqref{eq:prob-forall-equals-1}.

We summarize the sufficient conditions put
forward in Theorems~\ref{thm:underapprox} and~\ref{thm:overapprox} in
Theorem~\ref{thm:subset-and-superset} to achieve the desired approximation,
which addresses Problem~\ref{problem:under-and-over-approx}, i.e. characterizes the conditions upon
$\boundeddist$ and $\boundeddistover$ that ensure that the robust and augmented
effective target sets will under and overapproximate $\stochlevelset{k}$,
respectively.
\begin{theorem}
    Under Assumption~\ref{assum:exist}, $\alpha\in[0,1]$, and
    $\boundeddist,\boundeddistover \subseteq \mathcal{W}$ such that for all $t
    \in \dtinterval{0}{N-1}$, $\mathbb{P}_{w}(w_{t} \in \boundeddist) =
    {\alpha}^{\frac{1}{N}}$ and $\mathbb{P}_{w}(w_{t} \in \boundeddistover) =
    {(1-\alpha)}^{\frac{1}{N}}$, then $\reachunder{k} \subseteq
    \mathcal{L}_{k}(\alpha)\subseteq\reachover{k}$ for $k \in \dtinterval{0}{N}$.
    \label{thm:subset-and-superset}
\end{theorem}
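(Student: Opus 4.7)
The plan is to deduce Theorem~\ref{thm:subset-and-superset} directly from Theorems~\ref{thm:underapprox} and~\ref{thm:overapprox}. Those earlier results already supply the containments $\reachunder{k} \subseteq \stochlevelset{k}$ and $\stochlevelset{k} \subseteq \reachover{k}$, but under the $k$-dependent hypotheses $\mathbb{P}_w(w_t \in \boundeddist) = \alpha^{1/(N-k)}$ and $\mathbb{P}_w(w_t \in \boundeddistover) = (1-\alpha)^{1/(N-k)}$. The present theorem replaces these by a single hypothesis using the uniform exponent $1/N$, so the task is to confirm that this one assumption is strong enough to fire each of those earlier results simultaneously at every $k \in \dtinterval{0}{N}$.

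First I would dispose of the boundary case $k = N$: since $\dtinterval{N}{N-1}$ is empty, the quantifiers in \eqref{eq:dist-minimal-reach-set} and \eqref{eq:dist-maximal-reach-set} are vacuous, giving $\reachunder{N} = \reachover{N} = \targetset_N$; and \eqref{eq:valFunN} yields $V_N^\ast = \ONE_{\targetset_N}$, whence $\stochlevelset{N} = \targetset_N$ for every $\alpha \in [0,1]$. So the inclusion chain is trivially an equality at $k=N$.

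For $k \in \dtinterval{0}{N-1}$, I would re-read the earlier proofs to isolate exactly how each probability hypothesis enters. The underapproximation chain culminating in \eqref{eq:approx} uses only the one-sided inequality $(\mathbb{P}_w(w_t \in \boundeddist))^{N-k} \geq \alpha$, while the overapproximation chain culminating in \eqref{eq:probbarw_{o}ver} uses only $(\mathbb{P}_w(w_t \in \boundeddistover))^{N-k} \geq 1 - \alpha$. The elementary scalar monotonicity $\beta^r \geq \beta$ for $\beta \in [0,1]$ and $r \in (0,1]$, applied with $r = (N-k)/N \in (0,1]$ to $\beta = \alpha$ and then to $\beta = 1 - \alpha$, immediately converts the $1/N$-exponent hypothesis into the required one-sided bounds. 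Substituting these back into the derivations of Theorems~\ref{thm:underapprox} and~\ref{thm:overapprox} then produces both containments at every such $k$.

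There is essentially no serious obstacle; the theorem is a short corollary of the two previous results. The only step worth being careful about is the book-keeping in the previous paragraph, namely checking that the earlier proofs never require more than the one-sided probability inequalities noted above. Once this is verified, the scalar monotonicity bound closes the argument.
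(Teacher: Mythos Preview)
Your proposal is correct and follows the same route the paper intends: the paper presents Theorem~\ref{thm:subset-and-superset} simply as a summary of Theorems~\ref{thm:underapprox} and~\ref{thm:overapprox}, without a separate written proof. You are in fact more careful than the paper here, since you explicitly reconcile the exponent mismatch (the earlier theorems are stated with $\alpha^{1/(N-k)}$ and $(1-\alpha)^{1/(N-k)}$, while Theorem~\ref{thm:subset-and-superset} fixes the exponent at $1/N$ yet asserts the conclusion for all $k$). Your observation that the proofs of \eqref{eq:approx} and \eqref{eq:probbarw_{o}ver} only require the one-sided bounds $(\mathbb{P}_w(w_t\in\boundeddist))^{N-k}\ge\alpha$ and $(\mathbb{P}_w(w_t\in\boundeddistover))^{N-k}\ge 1-\alpha$, together with the elementary monotonicity $\beta^{(N-k)/N}\ge\beta$ for $\beta\in[0,1]$, is exactly the missing step, and your handling of the trivial $k=N$ case is also apt.
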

\begin{remark}
    The bounded disturbance sets, $\boundeddist$, $\boundeddistover$, which
    satisfy Theorem \ref{thm:subset-and-superset} are not unique.
    \label{remark:non-unique-bounded-sets}
\end{remark}
We will describe later, in Section
\ref{sec:computation_of_boundeddist_boundeddistover}, methodologies for
computing these bounded sets. As noted in Remark
\ref{remark:non-unique-bounded-sets}, these sets are not unique. However, the
algorithms that will be presented will provide efficient means for generating
$\boundeddist$ and $\boundeddistover$ for Gaussian and non-Gaussian
disturbances alike.

\section{Lagrangian methods for computation of disturbance minimal and maximal reach sets}
\label{sec:lagrangian-methods-recursion}

In this section we presume that we have bounded sets $\boundeddist$ and
$\boundeddistover$ which satisfy the conditions given in Theorem 
\ref{thm:subset-and-superset}. We now demonstrate convenient backward recursions
to compute \eqref{eq:dist-minimal-reach-set} and \eqref{eq:dist-maximal-reach-set}
using set operations.

For these recursions we will need to define, as in \cite{maidens_2013,GleasonCDC2017},
the unperturbed, one-step backward reach set from a set 
$\mathcal{S} \subseteq \statespace$ as $\onestepbackwardreach{k}(\mathcal{S})$.
Formally, for a nonlinear system \eqref{eq:nonlin}
\begin{align}
    \onestepbackwardreach{k}(\mathcal{S})&\triangleq\left\{ x^- \in \statespace : \exists u\in
    \inputspace, \exists y\in \mathcal{S},\ y=f_{k}(x^-,u)\right\} \nonumber \\ 
    &=\left\{ x^- \in \statespace : \onestepforwardreach{k}(x^-)\cap \mathcal{S}\neq\emptyset
    \right\} \label{eq:dFcapS}
\end{align}
For an LTV system \eqref{eq:lin}, these can be written as 
\begin{align}
    \onestepbackwardreach{k}( \mathcal{S})&= A_{k}^{-1}(\mathcal{S}\oplus (-B_{k}
    \inputspace)).\label{eq:reachOlinCts}
\end{align}

\subsection{Recursion for disturbance minimal reach set $\reachunder{k}$}



The following theorem details the backward recursion for the disturbance minimal
reach set.

\begin{theorem}\label{thm:robust-effective-target-recursion}
  For the system given in (\ref{eq:nonlin}), the $k$-time disturbance minimal
  reach set $\reachunder{k}$ can be computed using the recursion  for
  $k\in \mathbb{N}, k < N$:
  \begin{align}
      \reachunder{N} &= \targetset_{N}
      \label{eq:dra_{r}ecurse0}\\
      \reachunder{k} &= \targetset_{k} \cap \onestepbackwardreach{k}(\reachunder{k+1} \minkdiff \boundeddist)\label{eq:min-reach-recursion}
  \end{align}
\end{theorem}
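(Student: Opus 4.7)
The plan is a backward induction on $k$, starting from $k = N$ and descending. The base case is immediate: when $k = N$, the index set $\dtinterval{N}{N-1}$ is empty, so the universal quantifier over time in \eqref{eq:dist-minimal-reach-set} is vacuous, leaving $\reachunder{N} = \targetset_N$ as claimed.

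For the inductive step, assume the recursion correctly characterizes $\reachunder{k+1}$. Establish $\reachunder{k} = \targetset_k \cap \onestepbackwardreach{k}(\reachunder{k+1} \minkdiff \boundeddist)$ by double inclusion. For the forward direction, take $x_k \in \reachunder{k}$; by \eqref{eq:dist-minimal-reach-set}, $x_k \in \targetset_k$, and there is an input $u_k \in \inputspace$ such that for every $w_k \in \boundeddist$ the successor $f_k(x_k, u_k) + w_k$ itself lies in $\reachunder{k+1}$, since the remaining feedback laws $\nu_{k+1}, \ldots, \nu_{N-1}$ keep the rest of the trajectory inside the tube against all future disturbance sequences in $\boundeddist^{N-k-1}$. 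Thus $f_k(x_k, u_k) + \boundeddist \subseteq \reachunder{k+1}$, which by the definition of the Minkowski difference is equivalent to $f_k(x_k, u_k) \in \reachunder{k+1} \minkdiff \boundeddist$; this places $x_k$ in $\onestepbackwardreach{k}(\reachunder{k+1} \minkdiff \boundeddist)$ via \eqref{eq:dFcapS}.

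For the reverse inclusion, given $x_k \in \targetset_k \cap \onestepbackwardreach{k}(\reachunder{k+1} \minkdiff \boundeddist)$, \eqref{eq:dFcapS} supplies $u_k \in \inputspace$ with $f_k(x_k, u_k) \in \reachunder{k+1} \minkdiff \boundeddist$, hence $f_k(x_k, u_k) + w_k \in \reachunder{k+1}$ for every $w_k \in \boundeddist$. Invoking the inductive hypothesis at each such successor and concatenating with the law at time $k$ returning $u_k$ produces a policy witnessing $x_k \in \reachunder{k}$.

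The subtle point, and the main obstacle I anticipate, is that the inductive hypothesis a priori provides a possibly different feedback sequence for each realization $x_{k+1} = f_k(x_k, u_k) + w_k$; because feedback laws are functions of state alone, these pointwise choices assemble into a single family $(\nu_{k+1}, \ldots, \nu_{N-1})$ defined on $\reachunder{k+1}$, which suffices for this set-equality argument. Measurability of this assembled selection, which would be needed to connect back to the stochastic setting, is separately guaranteed by Theorem~\ref{thm:meas} and is therefore not an obstacle here. The remaining bookkeeping is simply the careful use of the equivalence $f_k(x_k, u_k) + \boundeddist \subseteq \reachunder{k+1} \Leftrightarrow f_k(x_k, u_k) \in \reachunder{k+1} \minkdiff \boundeddist$ and the unpacking of the nested quantifiers in the definition of $\reachunder{k}$.
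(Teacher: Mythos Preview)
Your proposal is correct and follows essentially the same backward-induction strategy as the paper's proof; the paper takes $k=N-1$ as its base case and verifies the recursion by directly unfolding the definitions of $\onestepbackwardreach{k}$ and $\minkdiff$, whereas you start at $k=N$ and argue each step by explicit double inclusion, but the content is the same. Your remark about assembling the per-successor feedback selections into a single family $(\nu_{k+1},\ldots,\nu_{N-1})$ is a point the paper leaves implicit, and it is good that you flagged it.
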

\begin{proof}
    We prove this by induction. Starting with the base case, $k=N-1$,
    \begin{align}
        \targetset_{N-1} \,\cap\, & \onestepbackwardreach{N-1}(\reachunder{N} \minkdiff \boundeddist) \nonumber \\
        & = \targetset_{N-1} \cap\, \onestepbackwardreach{N-1}(\targetset_{N} \minkdiff \boundeddist) \\
        & = \targetset_{N-1} \cap\, \big\{x^{-} \in \statespace : \exists \nu \in \mathcal{F}, \forall w_{N-1} \in \boundeddist, \nonumber \\
        & \hskip2cm f_{N-1}(x^{-}, \nu(x^{-})) + w_{N-1} \in \targetset_{N} \big\} \\
        & = \big\{ x_{N-1} \in \targetset_{N-1}: \exists \nu_{N-1} \in \mathcal{F}, \forall w_{N-1} \in \boundeddist, \nonumber \\
        & \hskip2cm f_{N-1}(x_{N-1}, \nu(x_{N-1})) + w_{N-1} \in \targetset_{N} \big\} \\
        & = \big\{ x_{N-1} \in \targetset_{N-1} : \exists \nu_{N-1} \in \mathcal{F}, \forall w_{N-1} \in \boundeddist, \nonumber \\
        & \hskip3.2cm x_{N} \in \targetset_{N} \big\} \\
        & = \reachunder{N-1}.
    \end{align}
    For any $k \in \naturalnums$, $k < N-1$
    \begin{align}
        \targetset_{k} \,\cap\, &\onestepbackwardreach{k}(\reachunder{k+1} \minkdiff \boundeddist) \nonumber \\
        & = \big\{ x_{k} \in \targetset_{k}: \exists \nu_{k} \in \mathcal{F}, \forall w_{k} \in \boundeddist, \nonumber \\
        & \hskip1.5cm f_{k}(x_{k}, \nu(x_{k})) + w_{k} \in \reachunder{k+1} \big\} \label{eq:recursion-k-start}\\
        & = \big\{ x_{k} \in \targetset_{k} : \exists \nu_{k} \in \mathcal{F}, \forall w_{k} \in \boundeddist,  \nonumber \\
        & \hskip1.5cm x_{k+1} \in \reachunder{k+1} \big\}.
    \end{align}
    By expanding $\reachunder{k+1}$ with its definition
    \eqref{eq:dist-minimal-reach-set},
    \begin{align}
         &= \big\{  x_{k} \in \targetset_{k} : \forall t \in \dtinterval{k}{N-1}, \exists \nu_{t} \in \mathcal{F}, \hskip0.6cm \nonumber \\
        & \hskip2.04cm \forall w_{t} \in \boundeddist, x_{t+1} \in \targetset_{t+1} \big\} \nonumber \\
        & = \reachunder{k} \label{eq:recursion-k-end}
    \end{align}
    which completes the proof.
\end{proof}

In systems for which the disturbance is not affine, i.e.
\begin{equation*}
    x_{k+1} = f_{k}(x_{k}, u_{k}, w_{k})
\end{equation*}
the recursion in Theorem \ref{thm:robust-effective-target-recursion} can be
altered by replacing $\reachunder{k+1} \minkdiff \boundeddist$ with
$\mathrm{Pre}(\reachunder{k+1})$, where $\mathrm{Pre}(\cdot)$ is the predecessor
set \cite{rakovic2006_reach}. Since many systems can be written in the form
\eqref{eq:nonlin}, i.e. with affine disturbances, we do not formally prove
the recursion using predecessor sets. 

\subsection{Recursion for disturbance maximal reach set $\reachover{k}$}

We follow a similar methodology as in the previous section to establish a
recursion for computing the disturbance maximal reach set.

\begin{figure*}
    \begin{center}
        \includegraphics{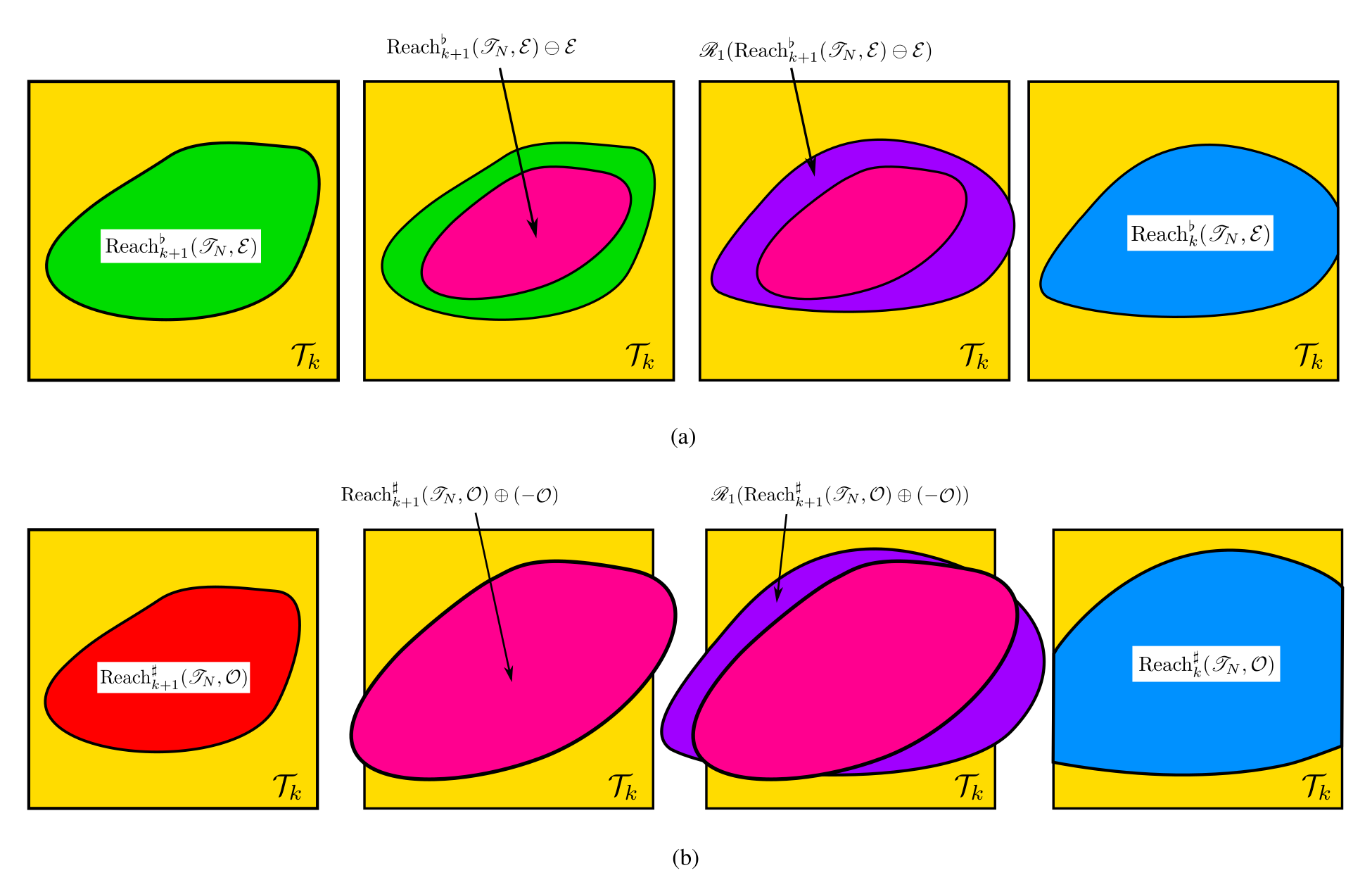}
    \end{center}
    \caption{Graphical depiction of Lagrangian recursion methods for computing (a) the
    $k$-time disturbance minimal reach set, $\reachunder{k}$, from
    $\reachunder{k+1}$ via \eqref{eq:min-reach-recursion} and (b) the $k$-time
    disturbance maximal reach set, $\reachover{k}$ from $\reachover{k+1}$ via
    \eqref{eq:max-reach-recursion}.}
    \label{fig:reach-sets-recursion}

    \makeatletter
    \@bsphack 
    \protected@write \@auxout {}{\string \newlabel {fig:min-reach-set-recursion}{{\@currentlabel (a)}{\thepage }}}
    \protected@write \@auxout {}{\string \newlabel {fig:max-reach-set-recursion}{{\@currentlabel (b)}{\thepage }}}
    \@esphack
\end{figure*}

\begin{theorem}\label{thm:augmented-effective-target-recursion}
    For the system given in (\ref{eq:nonlin}), the $k$-time
    disturbance maximal reach set $\reachover{k}$ can be computed using the
    recursion for $k \in \mathbb{N}, k < N$:
    \begin{align}
        \reachover{N} &= \targetset_{N} 
        \label{eq:raover_{r}ecurse0} \\
        \reachover{k} &= \targetset_{k} \cap
        \onestepbackwardreach{k}(\reachover{k+1} \minkadd (-\boundeddistover))\label{eq:max-reach-recursion}
    \end{align}
\end{theorem}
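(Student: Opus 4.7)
The plan is to mirror the inductive structure of Theorem~\ref{thm:robust-effective-target-recursion}, but with two modifications: the universal quantifier over the disturbance becomes existential, and correspondingly the Minkowski difference is replaced by a Minkowski sum with the negated set. I would proceed by backward induction on $k$, with the base case handled trivially by \eqref{eq:raover_{r}ecurse0}.

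The core of the argument is an algebraic unfolding of the right-hand side of \eqref{eq:max-reach-recursion}. Using the definition of Minkowski summation, one has $y \in \reachover{k+1} \minkadd (-\boundeddistover)$ if and only if there exists $w \in \boundeddistover$ with $y + w \in \reachover{k+1}$. Combining this with \eqref{eq:dFcapS}, the set $\onestepbackwardreach{k}(\reachover{k+1} \minkadd (-\boundeddistover))$ is exactly the set of states $x^-$ for which there exist $u \in \inputspace$ and $w \in \boundeddistover$ such that $f_{k}(x^-, u) + w \in \reachover{k+1}$. Intersecting with $\targetset_{k}$ yields
\begin{align*}
\targetset_{k} \cap \onestepbackwardreach{k}(\reachover{k+1} \minkadd (-\boundeddistover))
= \big\{x_{k}\in \targetset_{k} : {} & \exists \nu_{k}\in \mathcal{F},\ \exists w_{k}\in \boundeddistover, \\ & x_{k+1} \in \reachover{k+1}\big\}.
\end{align*}

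I would then substitute the definition \eqref{eq:dist-maximal-reach-set} of $\reachover{k+1}$ (the inductive hypothesis at step $k+1$ collapses trivially into the defining set-builder expression). Because $\reachover{k+1}\subseteq \targetset_{k+1}$, the condition $x_{k+1}\in\reachover{k+1}$ already implies $x_{k+1}\in \targetset_{k+1}$, so appending the $t = k$ slice to the $\forall t\in \dtinterval{k+1}{N-1}$ clause simply extends it to $\forall t\in \dtinterval{k}{N-1}$, producing precisely \eqref{eq:dist-maximal-reach-set} at index $k$. The base case $k=N-1$ is handled by the identical calculation with $\reachover{N}=\targetset_{N}$.

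The main technical obstacle is bookkeeping of the existential quantifiers across time steps: one must verify that an outer $\exists \nu_{k}, \exists w_{k}$ combined with the nested $\forall t \in \dtinterval{k+1}{N-1}, \exists \nu_{t}, \exists w_{t}$ implicit in $x_{k+1}\in \reachover{k+1}$ correctly coalesces into $\forall t\in \dtinterval{k}{N-1}, \exists \nu_{t}, \exists w_{t}$. This works cleanly because all quantifiers over the control and disturbance are existential and the outer $\forall t$ is just a conjunction over time indices; no measurability or selection theorem is required (unlike the underapproximation side, which leaned on Theorem~\ref{thm:meas} for the stochastic interpretation). The affine entry of $w_{k}$ in \eqref{eq:nonlin} is what permits the clean Minkowski-sum representation; for non-affine disturbances, the same recursion would hold if $\reachover{k+1}\minkadd(-\boundeddistover)$ is replaced by an appropriate predecessor-style operator, as remarked following Theorem~\ref{thm:robust-effective-target-recursion}.
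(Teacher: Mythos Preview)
Your proposal is correct and follows essentially the same route as the paper's own proof: backward induction with base case $k=N-1$, unfolding the Minkowski sum and the one-step backward reach set \eqref{eq:dFcapS} to obtain $\{x_k\in\targetset_k:\exists\nu_k\in\mathcal{F},\ \exists w_k\in\boundeddistover,\ x_{k+1}\in\reachover{k+1}\}$, and then substituting the definition \eqref{eq:dist-maximal-reach-set} at index $k+1$ to extend the time range to $\dtinterval{k}{N-1}$. Your added remarks on quantifier bookkeeping and the role of the affine disturbance are accurate and go slightly beyond what the paper spells out, but the argument itself is the same.
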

\begin{proof}
    Again, we prove this by induction. Starting with the base case, $k=N-1$,
    \begin{align}
        \targetset_{N-1} \,\cap\, & \onestepbackwardreach{N-1}(\reachover{N} \minkadd (- \boundeddistover)) \nonumber \\
        & = \targetset_{N-1} \cap\, \onestepbackwardreach{N-1}(\targetset_{N} \minkadd (- \boundeddistover)) \\
        & = \targetset_{N-1} \cap\, \big\{x^{-} \in \statespace : \exists \nu \in \mathcal{F}, \exists w_{N-1} \in \boundeddistover \nonumber \\
        & \hskip2cm f_{N-1}(x^{-}, \nu(x^{-})) + w_{N-1} \in \targetset_{N} \big\} \\
        & = \big\{ x_{N-1} \in \targetset_{N-1}: \exists \nu_{N-1} \in \mathcal{F}, \exists w_{N-1} \in \boundeddistover, \nonumber \\
        & \hskip2cm f_{N-1}(x_{N-1}, \nu(x_{N-1})) + w_{N-1} \in \targetset_{N} \big\} \\
        & = \big\{ x_{N-1} \in \targetset_{N-1} : \exists \nu_{N-1} \in \mathcal{F}, \exists w_{N-1} \in \boundeddistover, \nonumber \\
        & \hskip3.2cm x_{N} \in \targetset_{N} \big\} \\
        & = \reachover{N-1}.
    \end{align}
    For any $k \in \naturalnums$, $k < N-1$
    \begin{align}
        \targetset_{k} \,\cap\, &\onestepbackwardreach{k}(\reachover{k+1} \minkadd (- \boundeddistover)) \nonumber \\
        & = \big\{ x_{k} \in \targetset_{k}: \exists \nu_{k} \in \mathcal{F}, \exists w_{k} \in \boundeddistover, \nonumber \\
        & \hskip1.5cm f_{k}(x_{k}, \nu(x_{k})) + w_{k} \in \reachover{k+1} \big\} \label{eq:recursion-k-start}\\
        & = \big\{ x_{k} \in \targetset_{k} : \exists \nu_{k} \in \mathcal{F}, \exists w_{k} \in \boundeddistover,  \nonumber \\
        & \hskip1.5cm x_{k+1} \in \reachover{k+1} \big\}.
    \end{align}
    By expanding $\reachover{k+1}$ with its definition \eqref{eq:dist-minimal-reach-set},
    \begin{align}
         &= \big\{  x_{k} \in \targetset_{k} : \forall t \in \dtinterval{k}{N-1}, \exists \nu_{t} \in \mathcal{F}, \hskip0.6cm \nonumber \\
        & \hskip2.04cm \exists w_{t} \in \boundeddistover, x_{t+1} \in \targetset_{t+1} \big\} \nonumber \\
        & = \reachover{k} \label{eq:recursion-k-end}
    \end{align}
    which completes the proof.
\end{proof}

Figure \ref{fig:reach-sets-recursion} shows, graphically, the recursion process 
for the disturbance minimal and maximal reach sets.

We synthesize the recursions shown in Theorems \ref{thm:underapprox} and
\ref{thm:overapprox} into algorithmic forms, see Algorithms \ref{alg:rets} and
\ref{alg:aets}. These algorithms compute the robust and augmented effective
target sets which have been shown to approximate the stochastic effective
$\alpha$-level set. 

\begin{algorithm}[]
    \SetKwInOut{Input}{Input}\SetKwInOut{Output}{Output}

    \Input{Target tube $\targettube_{N}$; system dynamics; desired probability level $\alpha \in [0,1]$; a bounded
    disturbance sets, $\boundeddist$; horizon length, $N$}
    \Output{Augmented effective target set, $\reachunder{0}$}
    \vskip6pt
    $\reachunder{N} \leftarrow \mathcal{T}_{N}$, \\
    \For{$i = N-1, N-2, \dots, 0$}{
        // from \eqref{eq:min-reach-recursion} \\
        $S \leftarrow \reachunder{i+1} \minkdiff \boundeddist$ \\
        $E \leftarrow \onestepbackwardreach{k}(S)$ \\
        $\reachunder{i} \leftarrow \mathcal{T}_{i} \cap E$ \hfill \\
    }
    \caption{Recursion for the disturbance minimal reach set.}
    \label{alg:rets}
\end{algorithm}
\begin{algorithm}[]
    \SetKwInOut{Input}{Input}\SetKwInOut{Output}{Output}

    \Input{Target tube $\targettube_{N}$; system dynamics; desired probability level $\alpha \in [0,1]$; a bounded
    disturbance sets, $\boundeddistover$; horizon length, $N$}
    \Output{Robust effective target set, $\reachover{0}$}
    \vskip6pt
    $\reachover{N} \leftarrow \mathcal{T}_{N}$, \\
    \For{$i = N-1, N-2, \dots, 0$}{
        // from \eqref{eq:max-reach-recursion} \\
        $S \leftarrow \reachover{i+1} \minkadd (- \boundeddistover)$ \\
        $E \leftarrow \onestepbackwardreach{k}(S)$ \\
        $\reachover{i} \leftarrow \mathcal{T}_{i} \cap E$ \hfill \\
    }
    \caption{Recursion for the disturbance maximal reach set.}
    \label{alg:aets}
\end{algorithm}

\subsection{Min-max formulation for $\reachunder{k}$}
\label{sub:min-max}

A min-max optimal control problem was presented in~\cite[Sec.
1]{bertsekas1971minimax},~\cite[Sec. 4.6.2]{bertsekasDP} to compute the disturbance
minimal reach set for the system \eqref{eq:nonlin}. The optimization
problem is:\\
\begin{subequations}
    \emph{(Min-max problem for robust effective target sets)}\vspace*{-0.5em}
\begin{align}
    &\hspace{-1em}\begin{array}{rl}
      \underset{\rho\in \mathcal{P}}{\mbox{minimize}}&\hskip-0.2cm\underset{\bar{w}_{[0,N-1]} \in \boundeddist^N }{\mbox{
      maximize}}\  J(\rho,\bar{w}_{[0,N-1]};x_0)\\
      \mbox{subject to}& \hskip-0.3cm\left\{\arraycolsep=2pt \begin{array}{rll}
              x_{k+1} &= f(x_{k},\nu_{k}(x_{k}))  +w_{k},&\quad k \in \mathbb{N}_{[0,N-1]}\\
      \nu_{k}(\cdot) & \in \mathcal{F} &\quad k \in \mathbb{N}_{[0,N-1]}\\
      \rho&=[\nu_0(\cdot)\ \ldots\ \nu_{N-1}(\cdot)]\\
      \bar{w}_{[0,N-1]} &=[w_0\ \ldots\ w_{N-1}(\cdot)]\\
    \end{array}\right.
  \end{array}
  \label{eq:prob-minmax-opt}\\
  &\hspace{7.5em}g_{k}(x_{k})\triangleq 1-\ONE_{\mathcal{T}_{k}}(x_{k}),\ k\in \dtinterval{0}{N} \label{eq:gk_{d}efn}\\
  &J(\rho,\bar{w}_{[0,N-1]};x_0)=\sum\limits_{k=0}^{N} g_{k}(x_{k}) \label{eq:J_{d}efn}
\end{align}\label{eq:prob-minmax}%
\end{subequations}%
where the decision variables are $\rho$ and $\bar{w}_{[0,N-1]}$. The min-max
optimal control problem \eqref{eq:prob-minmax} can be solved using dynamic
programming~\cite[Sec 1.6]{bertsekasDP}. We generate the cost-to-go/value
functions   $J^\ast_{k}(x): \mathcal{X} \rightarrow \dtinterval{0}{N-k+1}$ for
$k\in \dtinterval{0}{N}$ with the optimal value of problem
\eqref{eq:prob-minmax} when starting at $x_0$ as $J^\ast_0(x_0)$. We define
$J^\ast_{N}(x)=g_{N}(x)$, and obtain the remaining value functions via a
dynamic programming recursion,
\begin{subequations}
    \begin{align}
        H^\ast_{k}(u,x)&=\sup_{w\in \boundeddist}
    \left[J^\ast_{k+1}(f(x,u)+w)+g_{k}(x)\right]\label{eq:minmaxVH}\\
    J^\ast_{k}(x)&=\inf_{u\in \inputspace}H^\ast_{k}(u,x). \label{eq:minmaxV}
    \end{align}\label{eq:prob-minmax-recurs}%
\end{subequations}%
Note that here $J^\ast_{k}(x)$ records the minimum count of violations of the
target tube constraint by the system \eqref{eq:nonlin} when starting at $x$
at time $k$ under the optimal choice of the inputs $\nu_{k}(\cdot)$ and
adversarial choice of the disturbances $w_{k}$~\cite[Sec.  4.6.2]{bertsekasDP}.
This follows from \eqref{eq:gk_{d}efn}, where $x_{k}\not\in \targetset_{k}$ if and
only if $ g_{k}(x_{k})=1$ and $x_{k}\in \targetset_{k}$ if and only if $g_{k}(x_{k})=0$. By
construction,
\begin{align}
    \reachunder{0} &= \{x\in \mathcal{X}: J^\ast_0(x_0)=0\}.
    \label{eq:link-rob}
\end{align}

\begin{remark}
    A min-min problem can be similarly constructed for the disturbance maximal
    reach set for system \eqref{eq:nonlin}. Here, the system is driven to
    stay within the target tube with the input's optimal efforts augmented by
    the disturbance.
\end{remark}

\subsection{Proof of Theorem~\ref{thm:meas}}
\label{sub:proof-meas}

Theorem~\ref{thm:meas} states for systems \eqref{eq:nonlin} that satisfies
Assumption~\ref{assum:exist}, there exists an optimal Markov policy $\pi^\ast$
associated with the set $\reachunder{0}$, \emph{i.e.}.
we want to show that there is an optimal policy for \eqref{eq:prob-minmax}
$\pi^\ast\in \mathcal{M}\subset \mathcal{P}$. We will prove this using the
equivalent min-max problem formulated in Section~\ref{sub:min-max}. 

The organization of the proof is as follows: first, we will show that $g_{k}$ are
\lsc{} over $\statespace$. Then, we will show that, for every $k\in
\mathbb{Z}_{[0,N-1]}$, the functions $J_{k}^\ast$ and $H_{k}^\ast$ of
\eqref{eq:prob-minmax} are \lsc{} over $\statespace$ and
$\statespace\times\inputspace$ respectively, and that there exists a
Borel-measurable (and therefore universally measurable~\cite[Definition
7.20]{BertsekasSOC1978}) state-feedback control law $\mu_{k}^\ast(\cdot)$. We
thus construct an optimal Markov policy $\pi^\ast\triangleq[\mu_0^\ast(\cdot)\
\ldots\ \mu_{N-1}^\ast(\cdot)]$ associated with
$\reachunder{0}$, completing the proof. 

Since the target tube $\targettube_{N}$ is closed, $-\ONE_{\targetset_{k}}$ is
\lsc{} over $\statespace$ by (P1). We conclude that $g_{k}$ is \lsc{} over
$\statespace$ for $k\in \mathbb{Z}_{[0,t]}$ by the fact that constant functions
are \lsc{}\footnote{The set $\{x\in \mathcal{X}: 1\leq \lambda\}$ is either
empty ($\lambda<1$) or the entire $ \mathcal{X}$  ($\lambda\geq 1$), both of
which are closed.}, (P2), and \eqref{eq:gk_{d}efn}. 

Next, we prove the \lsc{} property of $J_{k}^\ast$ and $H_{k}^\ast$ and the
existence of a Borel-measurable $\nu_{k}^\ast(\cdot)$ by induction. Consider the
base case $k=N-1$. From \eqref{eq:minmaxVH}, 
\begin{align}
    H_{N-1}^\ast(u,x)&=\sup_{w_{N-1}\in \boundeddist} [J_{N}(f(x,u)+w_{N-1})+g_{N-1}(x)] \nonumber \\
                     &=\sup_{w_{N-1}\in \boundeddist} [g_{N}(f(x,u)+w_{N-1})+g_{N-1}(x)]\label{eq:minmaxVH_{N}minus1}
\end{align}
Since $f$ is continuous over $ \statespace\times \inputspace$,
$g_{N}(f(x,u)+w_{N-1})$ is \lsc{} over
$\statespace\times\inputspace\times\mathcal{W}$ by the fact that $g_{N}$ is
\lsc{} over $ \mathcal{X}$ and (P3). This implies that the objective in
\eqref{eq:minmaxVH_{N}minus1} is \lsc{} by the fact that $g_{N-1}$ is \lsc{} over
$\statespace$ and (P2). Hence, $H_{N-1}^\ast(u,x)$ is \lsc{} over
$\statespace\times\inputspace$ by (P4). Additionally, since $\inputspace$ is
compact and $H_{N-1}^\ast(u,x)$ is \lsc{}, $J_{N-1}^\ast(x)$ is \lsc{} over $
\mathcal{X}$ and there exists a Borel-measurable state-feedback law
$\mu_{N-1}(x)$ that optimizes \eqref{eq:minmaxV} by (P5). This completes the
proof of the base case.

Assume, for induction, the case $k=t,\ t\in \mathbb{N}_{[0,N-2]}$ is true, i.e,
$J_{t}^\ast$ is lower semicontinuous. Then, by the same arguments as above, we
conclude that $J_{t-1}^\ast$ and $H_{t-1}^\ast$ is \lsc{} over $\statespace$
and $\statespace\times \inputspace$, and a Borel-measurable state-feedback law
$\mu_{t-1}^\ast(x)$ exists via (P2)--(P5), and \eqref{eq:prob-minmax-recurs}.
This completes the induction, and demonstrates the existence of $\pi^\ast$
associated with $\reachunder{0}$.



\section{Improving Approximations With Multiple Bounded Disturbance Sets}
\label{sec:algorithm_and_challenges}

Theorem \ref{thm:subset-and-superset} demonstrates that the disturbance minimal
and maximal reach sets are a under and overapproximation, respectively, 
of the stochastic reach set. These sets are computed using bounded disturbance
sets $\boundeddist$, and $\boundeddistover$. These disturbance sets, however,
are not unique (Remark 1) and many different disturbance sets can satisfy the 
sufficient conditions established in Section \ref{sec:stochastic-approx}.

In this section we will demonstrate how we can use many different bounded
disturbance sets $\boundeddist_{i}$, $\boundeddistover_{i}$, $i \in \{1, \dots,
M\}$, we can combine the these reach sets to help improve the approximation.
For this we will assume that $\boundeddist_{i}$, $\boundeddistover_{i}$ satisfy
the conditions established in Theorem \ref{thm:subset-and-superset} for each
$i \in \{1, \dots, M\}$. We will denote the disturbance minimal and maximal reach
sets using $\boundeddist_{i}$, $\boundeddistover_{i}$ as $\multreachunder{k}{i}$
and $\multreachover{k}{i}$, respectively.



\subsection{Tighter approximations via multiple disturbance sets}

We first examine how to improve the underapproximation using multiple disturbance
minimal reach sets computed using $\boundeddist_{i}$. Algorithm \ref{alg:union-algorithm}, Theorems
\ref{thm:approx-unions} and \ref{thm:convex-hull}, and Lemma
\ref{lem:multiple-rets-superset} demonstrate these techniques.

\begin{theorem}
    \label{thm:approx-unions}
    Let $\boundeddist_{i}$, be a bounded set which 
    satisfies the condition
    \begin{equation}
        \prob\left(\bar{w}_{[k,N-1]} \in \boundeddist_{i}^{N-k} \right) = \alpha
    \end{equation}
    for all $i \in \naturalnums_{[1,M]}$, $\bar{w}_{[k,N-1]} = [w_{k}, w_{k+1},
    \dots w_{N-1}]$. For a nonlinear system \eqref{eq:nonlin}, the union
    of each disturbance minimal reach set is a subset of the true stochastic reach
    set, i.e.
    \begin{equation}
        \radnl{k}{[1,M]} = \bigcup_{m=1}^{M} \multreachunder{k}{m} \subseteq \levelset_{k}.
        \label{eq:union-subset}
    \end{equation}
\end{theorem}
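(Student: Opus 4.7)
The plan is to leverage Theorem~\ref{thm:underapprox} (equivalently, Theorem~\ref{thm:subset-and-superset}) applied individually to each $\boundeddist_i$, and then combine the results through the elementary fact that a union of subsets is itself a subset of the containing set.

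First, I would note that the hypothesis $\prob(\bar{w}_{[k,N-1]} \in \boundeddist_i^{N-k}) = \alpha$ combined with the i.i.d.\ assumption on the disturbance process implies $\prob_w(w_t \in \boundeddist_i) = \alpha^{1/(N-k)}$ for every $t \in \dtinterval{k}{N-1}$. This is precisely the sufficient condition in Theorem~\ref{thm:underapprox}, so applying that theorem to each $\boundeddist_i$ individually yields
\begin{equation*}
    \multreachunder{k}{m} \subseteq \stochlevelset{k} \qquad \forall\, m \in \naturalnums_{[1,M]}.
\end{equation*}

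Second, I would take the union over $m$. Since each term on the left is contained in the common set $\stochlevelset{k}$, the union is also contained in $\stochlevelset{k}$:
\begin{equation*}
    \radnl{k}{[1,M]} = \bigcup_{m=1}^M \multreachunder{k}{m} \subseteq \stochlevelset{k},
\end{equation*}
which is the desired conclusion.

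There is essentially no substantive obstacle here: the theorem is a direct corollary of Theorem~\ref{thm:underapprox} together with the monotonicity of set containment under unions. The only subtlety worth flagging is that the same probability level $\alpha$ must be used across all the $\boundeddist_i$, which is exactly what the hypothesis guarantees; if different probability masses were admitted, the individual approximations would lie in different level sets and could not be combined this way. I would therefore keep the proof to a few lines, simply invoking Theorem~\ref{thm:underapprox} for each index and taking the union.
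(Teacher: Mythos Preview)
Your proposal is correct and follows essentially the same approach as the paper: invoke the single-disturbance underapproximation result (Theorem~\ref{thm:underapprox}/\ref{thm:subset-and-superset}) for each $\boundeddist_i$ and then take the union. The paper's proof is in fact even terser than yours, omitting the explicit unpacking of the i.i.d.\ step you include.
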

\begin{proof}
    From \ref{thm:subset-and-superset} $\rad_{k}(\targettube_{N},
    \boundeddist_{i}) \subseteq \levelset_{k}$ for each $i \in
    \integers_{[1,N]}$. Thus the union of these sets remains a subset of
    $\levelset_{k}$.
\end{proof}

\begin{theorem}
    \label{thm:convex-hull}
    Let $\boundeddist_{i}$, be a bounded set which 
    satisfies the condition
    \begin{equation}
        \prob\left(\bar{w}_{[k,N-1]} \in \boundeddist_{i}^{N-k} \right) = \alpha
    \end{equation}
    for all $i \in \naturalnums_{[1,M]}$, $\bar{w}_{[k,N-1]} = [w_{k}, w_{k+1},
    \dots w_{N-1}]$. For a linear system \eqref{eq:lin}, if $\mathcal{U}$ is
    convex and compact, $\targettube_{N}$ is closed and convex, and $Q(\cdot|x,
    u)$ is continuous and log-concave, then the convex hull of the robust reach
    avoid set for each bounded disturbance $\rad_{k}(\targettube_{N},
    \boundeddist_{i})$, is a subset of the true reach-avoid level set
    $\levelset_{k}$, i.e. 
    \begin{equation}
        \radlin{k}{[1,M]} = \mathrm{Co}\big(\bigcup_{m=1}^{M} \multreachunder{k}{m} \big) \subseteq \levelset_{k}.
        \label{eq:convex-hull-subset}
    \end{equation}
\end{theorem}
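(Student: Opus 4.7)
The plan is to reduce the claim to showing that the stochastic $\alpha$-level reach set $\levelset_k$ is itself convex under the stated hypotheses; once that is established, the desired inclusion is an immediate consequence of Theorem~\ref{thm:subset-and-superset} combined with the minimality property of the convex hull.

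First, I would establish convexity of $\levelset_k$. The claim is that, under linear dynamics~\eqref{eq:lin}, a convex compact input set $\inputspace$, a closed convex target tube $\targettube_N$, and a continuous log-concave transition kernel $Q(\cdot|x,u)$, the value functions $V_k^\ast$ defined by the Bellman recursion~\eqref{eq:dyn_recurs} are log-concave in $x$ for every $k\in\dtinterval{0}{N}$, so that the super-level set $\levelset_k=\{x: V_k^\ast(x)\geq\alpha\}$ is convex. The proof goes by backward induction on $k$. The base case $V_N^\ast = \ONE_{\targetset_N}$ is log-concave because $\targetset_N$ is convex. For the induction step, writing the kernel density as $\psi_w(y - A_k x - B_k u)$, the conditional expectation
\begin{equation*}
    \int_{\statespace} V_{k+1}^\ast(y)\, Q(dy|x,u) \;=\; \mathbb{E}_w\!\left[V_{k+1}^\ast(A_k x + B_k u + w)\right]
\end{equation*}
is log-concave in $(x,u)$: the integrand $V_{k+1}^\ast(A_k x + B_k u + w)\,\psi_w(w)$ is jointly log-concave in $(x,u,w)$ (composition of a log-concave function with an affine map, times a log-concave density), and marginalizing out $w$ preserves log-concavity by Pr\'ekopa's theorem. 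Taking the supremum over $u\in\inputspace$ preserves log-concavity by the standard partial-maximization argument (feasibility of convex combinations of maximizers over the convex set $\inputspace$). Finally, multiplying by $\ONE_{\targetset_k}$---log-concave because $\targetset_k$ is convex---preserves log-concavity. Hence $V_k^\ast$ is log-concave, and its super-level set $\levelset_k$ is convex.

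Having convexity of $\levelset_k$ in hand, the rest is short. By Theorem~\ref{thm:subset-and-superset} applied to each $\boundeddist_m$ (the hypothesis $\prob(\bar{w}_{[k,N-1]}\in\boundeddist_m^{N-k})=\alpha$ together with the i.i.d.\ assumption gives $\mathbb{P}_w(w_t\in\boundeddist_m)=\alpha^{1/(N-k)}$ as required), $\multreachunder{k}{m}\subseteq\levelset_k$ for every $m\in\{1,\ldots,M\}$. Taking the union, $\bigcup_{m=1}^M \multreachunder{k}{m}\subseteq\levelset_k$. Since the convex hull is the smallest convex set containing its argument and $\levelset_k$ is convex, it follows that
\begin{equation*}
    \radlin{k}{[1,M]} \;=\; \mathrm{Co}\Big(\bigcup_{m=1}^M \multreachunder{k}{m}\Big) \;\subseteq\; \levelset_k,
\end{equation*}
which is the desired inclusion.

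The main obstacle is the log-concavity propagation through the Bellman recursion, specifically the joint log-concavity of the conditional expectation in $(x,u)$ via Pr\'ekopa's theorem and its preservation under supremum over $u$ and multiplication by the indicator $\ONE_{\targetset_k}$. These are exactly the places where the hypotheses (linearity of the dynamics, convexity of $\inputspace$ and $\targettube_N$, log-concavity of $Q$) are used essentially---relaxing any of them breaks the induction. Once the convexity of $\levelset_k$ is secured, the conclusion is immediate, and in fact the same argument shows the stronger statement that the closure of the convex hull is contained in $\levelset_k$ whenever $\levelset_k$ is additionally closed.
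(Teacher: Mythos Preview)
Your proposal is correct and follows the same logical skeleton as the paper: first establish that $\levelset_k$ is convex, then invoke the underapproximation guarantee for each $\boundeddist_m$ and use that the convex hull is the smallest convex superset. The only difference is that the paper outsources the convexity of $\levelset_k$ to an external reference (\cite[Theorem~4]{VinodHSCC2018}), whereas you supply a self-contained backward-induction argument via Pr\'ekopa's theorem; your inline argument is exactly the mechanism behind that cited result, so the two proofs are substantively identical.
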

\begin{proof}
    From \ref{thm:subset-and-superset} $\multreachunder{k}{i} \subseteq \levelset_{k}$ for each $i \in
    \naturalnums_{[1,M]}$, and from \cite[Theorem 4]{VinodHSCC2018},
    $\levelset_{k}$ is convex. Thus, the convex hull of disturbance minimal reach sets
    is a subset of $\levelset_{k}$ \cite[Section 2.3.4]{BoydConvex2004}.
\end{proof}

\begin{lemma}
    Let $\boundeddist_{j}$ be a bounded set in the collection $\boundeddist_{1}, \dots,
    \boundeddist_{M}$, $M,j \in \naturalnums$, $j < M$. Then
    \begin{equation}
        \multreachunder{k}{j} \subseteq \radlin{k}{[1,M]}
    \end{equation}
    \label{lem:multiple-rets-superset}
\end{lemma}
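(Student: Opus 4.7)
The statement is essentially a containment fact about convex hulls of unions, so the plan is to verify it by a two-step inclusion chain rather than by any deeper analysis.

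First, I would unpack the notation on the right-hand side. Recalling the definition
\[
\radlin{k}{[1,M]} = \mathrm{Co}\Big(\bigcup_{m=1}^{M} \multreachunder{k}{m}\Big)
\]
from Theorem~\ref{thm:convex-hull}, the goal reduces to showing that a single member of the collection $\{\multreachunder{k}{m}\}_{m=1}^{M}$ lies inside the convex hull of the union.

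The natural two-step argument is as follows. Since $j \in \{1,\dots,M\}$ by hypothesis, the set $\multreachunder{k}{j}$ is one of the summands of the union, so trivially
\[
\multreachunder{k}{j} \subseteq \bigcup_{m=1}^{M} \multreachunder{k}{m}.
\]
Then, because the convex hull of any set $S$ satisfies $S \subseteq \mathrm{Co}(S)$ (each point of $S$ is a degenerate convex combination of itself), applying the hull to both sides of the above inclusion preserves the containment and yields
\[
\multreachunder{k}{j} \subseteq \mathrm{Co}\Big(\bigcup_{m=1}^{M} \multreachunder{k}{m}\Big) = \radlin{k}{[1,M]}.
\]

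There is no real obstacle here; the result is a one-line consequence of the definitions and the elementary property that a set is contained in its convex hull. The main ``work'' is simply ensuring the notation $\radlin{k}{[1,M]}$ is interpreted as the convex hull version (as opposed to the plain union $\radnl{k}{[1,M]}$ of Theorem~\ref{thm:approx-unions}), and noting that the hypotheses on $\boundeddist_j$ needed for $\multreachunder{k}{j}$ to even be defined are already in force by assumption. No convexity or measurability arguments are needed beyond what was already established.
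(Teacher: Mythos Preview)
Your proposal is correct and essentially matches the paper's own argument: the paper's proof is likewise a one-line observation that any $x\in\multreachunder{k}{j}$ lies in the union \eqref{eq:union-subset} and hence in the convex hull \eqref{eq:convex-hull-subset}. The only cosmetic difference is that the paper explicitly splits into the nonlinear (plain union) and linear (convex hull) readings of the overloaded symbol $\radlin{k}{[1,M]}$, whereas you commit to the convex-hull interpretation---but since your intermediate step $\multreachunder{k}{j}\subseteq\bigcup_{m}\multreachunder{k}{m}$ is precisely the union case, you cover both anyway.
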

\begin{proof}
    For a nonlinear system \eqref{eq:nonlin}, for any $x \in
    \multreachunder{k}{j}$, $x \in \radnl{k}{[1,M]}$ by
    \eqref{eq:union-subset}, and for a linear system \eqref{eq:lin} any $y \in
    \multreachunder{k}{j}$, $y \in \radlin{k}{[1,M]}$ by
    \eqref{eq:convex-hull-subset}.
\end{proof}
\begin{algorithm}[]
    \SetKwInOut{Input}{Input}\SetKwInOut{Output}{Output}

    \Input{Target tube $\targettube$; system dynamics; desired probability
           level $\alpha \in [0,1]$; $M$ bounded disturbance sets,
           $\boundeddist_{1},\dots,\boundeddist_{M}$; horizon length, $N$}
    \Output{$N$-time stochastic reach-avoid $\alpha$-level set
            underapproximation, $\radnl{0}{[1,M]}$}
    \vskip6pt

    // Initialization \\
    \For{$m = 1, 2, \dots, M$}{
        $\multreachunder{N}{m} \leftarrow \mathcal{T}_{N}$
    }

    \vskip6pt
    // Recursion \\
    \For{$i = N-1, N-2, \dots, 0$}{
        \For{$m = 1, 2, \dots, M$}{
            // from Algorithm \ref{alg:rets} \\
            $S \leftarrow \multreachunder{i+1}{m} \minkdiff \boundeddist_{m}$ \\
            $E \leftarrow \onestepbackwardreach{k}(S)$ \\
            $\multreachunder{i}{m} \gets \mathcal{T}_{i} \cap E$ \hfill \\
        }
        
        \If{nonlinear system dynamics \eqref{eq:nonlin}}{
            $\radnl{i}{[1,M]} \leftarrow 
        \bigcup_{m=1}^{M}\multreachunder{i}{m})$
        }

        \If{linear system dynamics \eqref{eq:lin}}{
            $\radlin{i}{[1,M]} \leftarrow 
        \mathrm{Co}\big(\bigcup_{m=1}^{M}\multreachunder{i}{m}\big)$
        }
    }
    \caption{Underapproximation of the stochastic effective $\alpha$-level set
             using multiple bounded disturbances and unions or convex hulls.}
    \label{alg:union-algorithm}
\end{algorithm}

Now we apply similar methodology to demonstrate how to improve the
overapproximation using multiple augmented effective target sets. 
\begin{theorem}
    \label{thm:approx-intersections}
    Let $\boundeddistover_{i}$, be a bounded set which 
    satisfies the condition
    \begin{equation}
        \prob\left(\bar{w}_{[k,N-1]} \in \boundeddistover_{i}^{N-k} \right) = \alpha
    \end{equation}
    for all $i \in \naturalnums_{[1,M]}$, $\bar{w}_{[k,N-1]} = [w_{k}, w_{k+1},
    \dots w_{N-1}]$. For a nonlinear system \eqref{eq:nonlin}, the union
    of each robust reach avoid set is a subset of the true reach-avoid level
    set, i.e.
    \begin{equation}
        \radovernl{k}{[1,M]} = \bigcap_{m=1}^{M} \multreachover{k}{m} \supseteq \levelset_{k}.
        \label{eq:intersection-subset}
    \end{equation}
\end{theorem}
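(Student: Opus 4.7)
The plan is to mirror the proof of Theorem~\ref{thm:approx-unions} in its dual form, exploiting the fact that each individual disturbance maximal reach set $\multreachover{k}{m}$ is already an overapproximation of $\stochlevelset{k}$ by Theorem~\ref{thm:subset-and-superset}. Indeed, for every $m \in \naturalnums_{[1,M]}$, the set $\boundeddistover_{m}$ satisfies the probability hypothesis $\prob(\bar{w}_{[k,N-1]} \in \boundeddistover_{m}^{N-k}) = 1-\alpha$ required by Theorem~\ref{thm:overapprox} (translating the hypothesis stated in the theorem so that it matches the sufficient condition established earlier), whence $\levelset_{k} \subseteq \multreachover{k}{m}$.

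With $M$ simultaneous containments $\levelset_{k} \subseteq \multreachover{k}{m}$ in hand, I would then invoke the elementary set-theoretic fact that if a set is contained in each member of a family, it is contained in the intersection of that family. Concretely, for any $y \in \levelset_{k}$ and every $m \in \naturalnums_{[1,M]}$, one has $y \in \multreachover{k}{m}$, so $y \in \bigcap_{m=1}^{M} \multreachover{k}{m} = \radovernl{k}{[1,M]}$, which yields the desired inclusion
\begin{equation*}
\levelset_{k} \subseteq \radovernl{k}{[1,M]}.
\end{equation*}

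There is essentially no hard step here; the result is a two-line consequence of Theorem~\ref{thm:subset-and-superset} plus the monotonicity of intersections under containment. The only subtlety worth flagging in the write-up is that the theorem statement appears to contain a typographical mismatch: the prose speaks of a \emph{union} being a \emph{subset}, while the displayed equation \eqref{eq:intersection-subset} correctly asserts that the \emph{intersection} is a \emph{superset}. I would follow the displayed equation, since that is the meaningful dual of Theorem~\ref{thm:approx-unions} for overapproximations, and because an intersection of overapproximations is the natural way to tighten a superset bound while preserving validity.
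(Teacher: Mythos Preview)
Your proposal is correct and follows essentially the same approach as the paper: invoke Theorem~\ref{thm:subset-and-superset} to get $\levelset_{k}\subseteq\multreachover{k}{m}$ for each $m$, then intersect. Your observation about the typographical mismatches (``union''/``subset'' in the prose, and the $\alpha$ versus $1-\alpha$ in the hypothesis) is accurate and worth noting, though the paper itself does not comment on them.
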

\begin{proof}
    From \ref{thm:subset-and-superset} $\radover_{k}(\targettube_{N},
    \boundeddistover_{i}) \supseteq \levelset_{k}$ for each $i \in
    \integers_{[1,N]}$. Thus the intersection of these sets remains a subset
    of $\levelset_{k}$.
\end{proof}

\begin{lemma}
    \label{lem:intersections-supset}
    Let $\boundeddistover_{j}$ be a bounded set in the collection
    $\boundeddistover_{1}, \dots, \boundeddistover_{M}$, $M,j \in
    \naturalnums$, $j < M$. Then
    \begin{equation}
         \radovernl{k}{[1,M]} \subseteq \radover_{k}(\targettube_{N}, \boundeddistover_{j})
    \end{equation}
\end{lemma}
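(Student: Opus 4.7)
The plan is to observe that this is an essentially trivial consequence of the definition of $\radovernl{k}{[1,M]}$ as an intersection, combined with Theorem~\ref{thm:approx-intersections}. Since $j \in \{1, \dots, M\}$, the set $\multreachover{k}{j}$ appears as one of the factors in the intersection
\[
    \radovernl{k}{[1,M]} = \bigcap_{m=1}^{M} \multreachover{k}{m},
\]
and by the basic set-theoretic fact that an intersection is contained in each of its constituents, the containment $\radovernl{k}{[1,M]} \subseteq \multreachover{k}{j}$ follows immediately.

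More explicitly, I would argue pointwise: take any $x \in \radovernl{k}{[1,M]}$. By definition of intersection, $x \in \multreachover{k}{m}$ for every $m \in \{1, \dots, M\}$, and in particular for $m = j$, which gives $x \in \multreachover{k}{j} = \radover_{k}(\targettube_{N}, \boundeddistover_{j})$. This yields the desired inclusion.

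There is no real obstacle here; the lemma is a structural observation rather than a substantive result. Its role in the paper is presumably to make precise the companion statement to Lemma~\ref{lem:multiple-rets-superset}, namely that intersecting additional disturbance maximal reach sets can only tighten (shrink) the overapproximation, just as taking unions or convex hulls of disturbance minimal reach sets can only enlarge the underapproximation. Accordingly, the proof should be written in one or two lines, simply invoking the definition of $\radovernl{k}{[1,M]}$.
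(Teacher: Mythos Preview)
Your proposal is correct and matches the paper's own proof essentially verbatim: the paper simply invokes the intersection definition \eqref{eq:intersection-subset} and notes that the inclusion is immediate for any $j \in \{1,\dots,M\}$. There is nothing to add.
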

\begin{proof}
    From the intersection operation in \eqref{eq:intersection-subset}, clearly
    $\radovernl{k}{[1,M]} \subseteq \radover_{k}(\targettube_{N},
    \boundeddistover_{j})$ for any $j \in \{1, \dots, M\}$.
\end{proof}

The methods from Theorem \ref{thm:approx-intersections} and Lemma
\ref{lem:intersections-supset} are succinctly combined in Algorithm
\ref{alg:interect-over-approx}.

\begin{algorithm}[]
    \SetKwInOut{Input}{Input}\SetKwInOut{Output}{Output}

    \Input{Target tube $\targettube$; system dynamics; desired probability
           level $\alpha \in [0,1]$; $M$ bounded disturbance sets,
           $\boundeddist_{1},\dots,\boundeddist_{M}$; horizon length, $N$}
    \Output{$N$-time stochastic reach-avoid $\alpha$-level set
            underapproximation, $\radovernl{0}{[1,M]}$}
    \vskip6pt
    // Initialization \\
    \For{$m = 1, 2, \dots, M$}{
        $\multreachover{N}{m} \leftarrow \mathcal{T}_{N}$
    }

    \vskip6pt
    // Recursion \\
    \For{$i = N-1, N-2, \dots, 0$}{
        \For{$m = 1, 2, \dots, M$}{
            // from Algorithm \ref{alg:aets} \\
            $S \leftarrow \multreachover{i+1}{m} \minkadd (- \boundeddistover_{m})$ \\
            $E \leftarrow \onestepbackwardreach{k}(S)$ \\
            $\multreachover{i}{m} \leftarrow \mathcal{T}_{k} \cap E$ \hfill \\
        }
        
        \If{nonlinear system dynamics \eqref{eq:nonlin}}{
            $\radovernl{i}{[1,M]} \leftarrow 
        \bigcap_{m=1}^{M}\multreachover{i}{m}$
        }
    }
    \caption{Overapproximation of the stochastic effective $\alpha$-level set
             using multiple bounded disturbances and intersections.}
    \label{alg:interect-over-approx}
\end{algorithm}




\section{Computation of disturbance subsets: $\boundeddist,\boundeddistover$}
\label{sec:computation_of_boundeddist_boundeddistover}

With the sufficient conditions established for $\boundeddist,\boundeddistover$
in Theorem \ref{thm:subset-and-superset} we now focus our attention on methods
for computing bounded disturbance sets that satisfy these criteria. To
reiterate, our disturbance in \eqref{eq:nonlin}, $w_{k}$ is an assumed
i.i.d. disturbance drawn from the probability space $(\distspace,
\sigma(\distspace), \prob_{w_{k}})$. As mentioned in Remark
\ref{remark:non-unique-bounded-sets}, these bounded disturbance sets need not
be unique. Thus, there are many different methods that can be used to these
bounded sets. Here we propose an optimization problem to obtain generic
polyhedra that can be used to represent the bounded sets.

First, we define a polytope
\def\polytope#1#2{\hbox{Poly}(#1, #2)}
\begin{equation}
    \polytope{A}{b} = \big\{ y \in \realnums^{p}: A y \preceq b\}
\end{equation}
where $A \in \realnums^{q \times p}$, $b \in \realnums^{q}$, and $y \preceq
b$ if $y_{i} \leq b_{i}$ for all $i \in \{1, \dots, p\}$.
For a fixed polytopic shape, i.e. a fixed $A$, we formulate the optimization
problem,
\begin{mini!}
    {b}{\log(\mathrm{vol}(\polytope{A}{b})) \label{eq:choose_boundeddist_cost}}{\label{prob:choose_boundeddist}}{}
    \addConstraint{\gamma^\frac{1}{N}}{\leq\prob_{w}(w_{k}\in \polytope{A}{b})\label{eq:choose_boundeddist_constraint2}}
\end{mini!}
To obtain $\boundeddist$ we use $\gamma = \alpha$, and for $\boundeddistover$,
$\gamma = 1 - \alpha$. The volume $\mathrm{vol}$ is the Lebesgue measure of the
set, 
\begin{equation*}
    \mathrm{vol}(A) = \int_{A} dx.
\end{equation*}
\begin{lemma}
    For $\theta \in [0,1]$, and $A \in \realnums^{q \times p}$ such that
    $\polytope{A}{0} = \{ 0 \}$, then $\polytope{A}{\theta b} = \theta
    \polytope{A}{b}$.
    \label{lem:scaled-polytopes}
\end{lemma}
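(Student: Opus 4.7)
The plan is a double-inclusion argument, split on whether $\theta = 0$ or $\theta > 0$, with the hypothesis $\polytope{A}{0}=\{0\}$ invoked exclusively to handle the degenerate case.

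First I would prove the inclusion $\theta \polytope{A}{b} \subseteq \polytope{A}{\theta b}$ uniformly in $\theta \in [0,1]$. Pick $y \in \polytope{A}{b}$, so that $Ay \preceq b$. Since $\theta \geq 0$, multiplication by $\theta$ preserves the coordinatewise inequality, giving $A(\theta y) = \theta Ay \preceq \theta b$. Hence $\theta y \in \polytope{A}{\theta b}$, and since $y$ was arbitrary we obtain the forward inclusion. (If $\polytope{A}{b}=\emptyset$, the inclusion holds vacuously.)

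Next I would prove the reverse inclusion $\polytope{A}{\theta b} \subseteq \theta \polytope{A}{b}$. For $\theta \in (0,1]$ this is the routine rescaling: given $z$ with $Az \preceq \theta b$, divide by $\theta$ to obtain $A(z/\theta) \preceq b$, so $z/\theta \in \polytope{A}{b}$ and $z = \theta(z/\theta) \in \theta\polytope{A}{b}$. The case $\theta = 0$ is where the hypothesis is essential: here I must show $\polytope{A}{0} \subseteq 0 \cdot \polytope{A}{b} = \{0\}$, which is immediate from $\polytope{A}{0} = \{0\}$.

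The only real subtlety — and what I expect to be the main sticking point — is the $\theta = 0$ case when $\polytope{A}{b}$ could be empty, since then $0 \cdot \polytope{A}{b} = \emptyset$ while $\polytope{A}{0} = \{0\}$. In the paper's usage, however, $\polytope{A}{b}$ is always a nonempty bounded disturbance set containing the origin (this is what the $\polytope{A}{0}=\{0\}$ assumption rules in: a cone recession condition forcing boundedness while still permitting $b \succeq 0$), so this pathology does not arise. I would note this briefly and conclude that combining both inclusions yields $\polytope{A}{\theta b} = \theta \polytope{A}{b}$. No deeper machinery is required; the entire argument is linear algebra plus the cone-triviality hypothesis used only at the boundary $\theta = 0$.
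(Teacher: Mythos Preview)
Your proposal is correct and follows essentially the same double-inclusion argument as the paper, which also splits on $\theta=0$ versus $\theta\in(0,1]$ and uses the hypothesis $\polytope{A}{0}=\{0\}$ only for the degenerate case. Your observation about the empty-$\polytope{A}{b}$ edge case at $\theta=0$ is a point the paper simply glosses over (writing $0\times\polytope{A}{b}=\{0\}$ without comment), so if anything your treatment is slightly more careful.
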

\begin{proof}
    For $\theta = 0$, $\polytope{A}{0} = 0 \times \polytope{A}{b} = \{ 0 \}.$

    For $\theta \in (0,1]$, if $y \in \polytope{A}{\theta b}$ then $Ay \leq
    \theta b \Rightarrow A y/\theta \leq b \Rightarrow y \in \theta
    \polytope{A}{b}$. Thus $\polytope{A}{\theta b} \subseteq \theta
    \polytope{A}{b}$.

    If $y \in \polytope{A}{b}$, then $Ay \leq b \Rightarrow A \theta y \leq
    \theta b$. Let $z = \theta y$, then $z \in \polytope{A}{\theta b}
    \Rightarrow \theta \polytope{A}{b} \subseteq \polytope{A}{\theta b}$.
\end{proof}
\begin{proposition}
    For $A \in \realnums^{q \times p}$ such that $\polytope{A}{0} = \{ 0 \}$,
    if $\prob_{w}$ is a log-concave probability measure, then
    \eqref{prob:choose_boundeddist} is a concave minimization problem.
    \label{prop:log-concave-polytope}
\end{proposition}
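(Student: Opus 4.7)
The problem \eqref{prob:choose_boundeddist} is declared concave minimization, so my plan is to verify two things: (i) the objective $b \mapsto \log(\mathrm{vol}(\polytope{A}{b}))$ is concave in $b$, and (ii) the feasible set $\{b : \prob_{w}(w_{k}\in\polytope{A}{b}) \geq \gamma^{1/N}\}$ is convex. Both parts rest on a single geometric inclusion: for $b_{1}, b_{2}$ and $\lambda \in [0,1]$, writing $b_{\lambda} = (1-\lambda)b_{1} + \lambda b_{2}$, linearity of $A$ gives
\[
(1-\lambda)\polytope{A}{b_{1}} \minkadd \lambda \polytope{A}{b_{2}} \subseteq \polytope{A}{b_{\lambda}},
\]
since $Ay_{i} \preceq b_{i}$ implies $A\bigl((1-\lambda)y_{1}+\lambda y_{2}\bigr) \preceq b_{\lambda}$. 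The standing hypothesis $\polytope{A}{0}=\{0\}$ is crucial: it rules out non-trivial recession directions, so every $\polytope{A}{b}$ is compact (or empty), which is needed to legitimately invoke Brunn--Minkowski and log-concave-measure inequalities on convex bodies.

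For part (i), I would combine the Minkowski inclusion with the Brunn--Minkowski inequality $\mathrm{vol}(K \minkadd L)^{1/p} \geq \mathrm{vol}(K)^{1/p} + \mathrm{vol}(L)^{1/p}$, together with the scaling identity $\mathrm{vol}(\theta K) = \theta^{p} \mathrm{vol}(K)$ (a direct consequence of Lemma~\ref{lem:scaled-polytopes}), and the monotonicity of the Lebesgue measure under set containment, to obtain
\[
\mathrm{vol}(\polytope{A}{b_{\lambda}})^{1/p} \;\geq\; (1-\lambda)\,\mathrm{vol}(\polytope{A}{b_{1}})^{1/p} + \lambda\,\mathrm{vol}(\polytope{A}{b_{2}})^{1/p}.
\]
Thus $b \mapsto \mathrm{vol}(\polytope{A}{b})^{1/p}$ is concave in $b$, and composing with the monotone concave map $t \mapsto p\log t$ (using the convention $\log 0 := -\infty$) preserves concavity, giving concavity of the objective.

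For part (ii), I would apply the Pr\'ekopa--Borell definition of a log-concave probability measure: for convex sets $K, L$, $\prob_{w}((1-\lambda)K \minkadd \lambda L) \geq \prob_{w}(K)^{1-\lambda}\prob_{w}(L)^{\lambda}$. Combining with monotonicity of $\prob_{w}$ and the Minkowski inclusion above yields
\[
\prob_{w}(\polytope{A}{b_{\lambda}}) \;\geq\; \prob_{w}(\polytope{A}{b_{1}})^{1-\lambda}\,\prob_{w}(\polytope{A}{b_{2}})^{\lambda},
\]
so $b \mapsto \log \prob_{w}(w_{k} \in \polytope{A}{b})$ is concave. The constraint \eqref{eq:choose_boundeddist_constraint2} can equivalently be written as $\log \prob_{w}(w_{k}\in \polytope{A}{b}) \geq (1/N)\log\gamma$, whose super-level set is convex by concavity. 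Together with (i), this establishes the claim.

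The main obstacle I anticipate is not conceptual but bookkeeping: one must carefully handle empty polytopes (where both $\log\mathrm{vol}$ and $\log\prob_{w}$ drop to $-\infty$) by working with extended-real-valued concave functions, and one must be precise that the boundedness granted by $\polytope{A}{0}=\{0\}$ is exactly what permits application of Brunn--Minkowski and the log-concavity definition to the convex bodies $\polytope{A}{b_{i}}$.
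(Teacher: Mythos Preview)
Your proposal is correct and follows essentially the same approach as the paper: both establish the Minkowski inclusion $(1-\lambda)\polytope{A}{b_{1}}\oplus\lambda\polytope{A}{b_{2}}\subseteq\polytope{A}{b_{\lambda}}$ (the paper via Lemma~\ref{lem:scaled-polytopes}) and then invoke the log-concavity of $\prob_{w}$ to show the feasible region is convex. The only difference is cosmetic: for the objective, the paper simply cites~\cite[Ex.~3.44]{BoydConvex2004} for the log-concavity of $b\mapsto\mathrm{vol}(\polytope{A}{b})$, whereas you derive it from scratch via Brunn--Minkowski, which is a pleasant elaboration but not a different route.
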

\begin{proof}
    \def\convexpoly{\polytope{A}{\theta b_{1} + (1 - \theta) b_{2}}}
    \def\thetapoly{\polytope{A}{\theta b_{1}}}
    \def\oneminusthetapoly{\polytope{A}{(1 - \theta) b_{2}}}
    From~\cite[Ex. 3.44]{BoydConvex2004}, $\mathrm{vol}(\polytope{A}{b})$ is
    log-concave in $b\in \realnums^p$, implying the objective
    \eqref{eq:choose_boundeddist_cost} is concave.

    Next we show that, for $\theta \in [0, 1]$, and $b_{1}, b_{2} \in
    \realnums^{q}$,
    $$
        \thetapoly \minkadd \oneminusthetapoly \subset \convexpoly.
    $$ 
    Let $z_{1} \in \thetapoly$ and $z_{2} \in \oneminusthetapoly$. Let $y =
    z_{1} + z_{2}$, hence $y \in \thetapoly \minkadd \oneminusthetapoly
    \Rightarrow Ay = Az_{1} + Az_{2} \leq \theta b_{1} + (1 - \theta) b_{2}
    \Rightarrow y \in \convexpoly$.

    Now,
    \begin{align}
        &\prob_{w_{k}}\big(w_{k} \in \convexpoly \big) \nonumber \\
        &\hskip1cm \geq \prob_{w_{k}} \big( w_{k} \in \thetapoly \minkadd \oneminusthetapoly \big) \nonumber \\
        &\hskip1cm = \prob_{w_{k}} \big( w_{k} \in \theta \polytope{A}{b_{1}} \minkadd (1 - \theta) \polytope{A}{b_{2}} \big) \label{eq:logcon_{t}hetaoutside_{p}rob} \\
        &\hskip1cm \geq \prob_{w_{k}} \big( w_{k} \in \polytope{A}{b_{1}} \big)^{\theta} \prob_{w_{k}} \big( w_{k} \in \polytope{A}{b_{2}} \big)^{1 - \theta} \label{eq:logcon_{p}roduct}.
    \end{align}
    Equation \eqref{eq:logcon_{t}hetaoutside_{p}rob} follows from Lemma
    \ref{lem:scaled-polytopes} and \eqref{eq:logcon_{p}roduct} follows from the
    log-concavity of $\prob_{w_{k}}$. Since log-concavity implies
    quasiconcavity, the constraint \eqref{eq:choose_boundeddist_constraint2} is
    convex. Thus, the problem \eqref{prob:choose_boundeddist} minimizes a
    concave function over a convex set.

\end{proof}
\def\polyabc#1#2#3{\hbox{Poly}(#1, #2, #3)}
\begin{corollary}
    For $A \in \realnums^{q \times p}$, and polytope
    \begin{equation}
        \polyabc{A}{b}{c} = \big\{y \in \realnums^{p}: A(y - c) \preceq b \big\}
    \end{equation}
    where $\polyabc{A}{0}{c} = \{c\}$, if $\prob_{w}$ is a log-concave
    probability measure, then \eqref{prob:choose_boundeddist} is a concave
    minimization problem.
\end{corollary}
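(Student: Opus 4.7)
The plan is to reduce the corollary to Proposition~\ref{prop:log-concave-polytope} by a deterministic translation. First I would observe that
\[
\polyabc{A}{b}{c} = \{c\} \minkadd \polytope{A}{b},
\]
so the shifted polytope is just a fixed translate of the one studied in Proposition~\ref{prop:log-concave-polytope}. Moreover, the hypothesis $\polyabc{A}{0}{c}=\{c\}$ is equivalent to $\polytope{A}{0}=\{0\}$, so the structural assumption on $A$ required by Proposition~\ref{prop:log-concave-polytope} is inherited without change.

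For the objective, I would invoke translation invariance of the Lebesgue measure to get $\mathrm{vol}(\polyabc{A}{b}{c}) = \mathrm{vol}(\polytope{A}{b})$, and conclude concavity of $\log \mathrm{vol}(\polyabc{A}{b}{c})$ in $b$ via the log-concavity of $\mathrm{vol}(\polytope{A}{b})$ in $b$ already used in the proof of Proposition~\ref{prop:log-concave-polytope}.

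For the constraint, I would introduce the translated random vector $\tilde w_k \triangleq w_k - c$. Because $c$ is a deterministic shift, the density of $\tilde w_k$ is a translate of the density of $w_k$, and translation preserves log-concavity. Rewriting the chance constraint as
\[
\gamma^{1/N} \leq \prob_w(w_k \in \polyabc{A}{b}{c}) = \prob_{\tilde w}(\tilde w_k \in \polytope{A}{b}),
\]
the feasible set in $b$ coincides with the feasible set of Proposition~\ref{prop:log-concave-polytope} applied to the log-concave measure $\prob_{\tilde w}$ and the unshifted polytope $\polytope{A}{b}$, and is therefore convex. Combining these two facts yields the desired concave minimization over a convex set. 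The reduction is essentially mechanical; the only point worth stating carefully is preservation of log-concavity under translation, so I do not anticipate any substantive obstacle.
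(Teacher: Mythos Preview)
Your proposal is correct and matches the paper's own argument: the paper also observes $\polyabc{A}{b}{c}=\polytope{A}{b}\minkadd\{c\}$, defines the shifted variable $v_k=w_k-c$, and then invokes Proposition~\ref{prop:log-concave-polytope}. Your version simply spells out more explicitly the translation invariance of volume and the preservation of log-concavity under translation, which the paper leaves implicit.
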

\begin{proof}
    Note that $\polyabc{A}{b}{c} = \polytope{A}{b} \minkadd \{c\}$. Let $v_{k}
    = w_{k} - c$, and the proof of Proposition \ref{prop:log-concave-polytope}
    holds for the random variable $v_{k}$.
\end{proof}

Multiplicative optimization problems belong to the class of concave
minimization (reverse convex optimization) problems, and they have been
well-studied in global optimization literature~\cite{benson1997multiplicative,
HorstGlobal2000}. They may be solved to global optimality using
branch-and-bound techniques. However, we employ a computationally simple
bisection method, described in Algorithm \ref{alg:bounded-set-bisection}, to
solve this problem to a potentially suboptimal solution~\cite[Sec.
3.3.3]{HorstGlobal2000}.

The use of generic polytopes defined by $A$, $b$, and $c$ allow for flexibility
in the definition of the bounded sets. For example, if $w_{k} \sim N(\mu,
\Sigma)$ then we can use Algorithm \ref{alg:bounded-set-bisection} to obtain a
cuboid bounded set by setting $A = \transpose{[\Sigma^{-1}, -\Sigma^{-1}]}$, $c
= \mu$, $b = \transpose{[1, \dots, 1]}$. If $w_{k}$ was drawn from an
exponential distribution we can obtain a cuboid bounded set with $A = I$, $c =
\transpose{[0, \dots, 0]}$, $b = \transpose{[1, \dots, 1]}$.
\begin{algorithm}
    \SetKwInOut{Input}{Input}
    \SetKwInOut{Output}{Output}

    \Input{Matrix, $A \in \realnums^{q \times p}$, and vectors $b \in \realnums^{q}, c \in \realnums^{p}$ defining initial Polyhedron $\polyabc{A}{b}{c}$, and $\gamma \in [0,1]$}
    \Output{Polyhedral bounded disturbance set $\mathcal{S}$}

    $\alpha \gets 1$ \\
    $p \gets \prob_{w_{k}}\big(w_{k} \in \polyabc{A}{\alpha b}{c}\big)$ \\

    \vskip 6pt
    // Expansion of polyhedron to find initial values for bisection \\
    \While{$p < \gamma^{\frac{1}{N}}$}{
        $\alpha \gets 2 \alpha$ \\
        $p \gets \prob_{w_{k}}\big(w_{k} \in \polyabc{A}{\alpha b}{c}\big)$ \\
    }

   \vskip6pt
    // Bisection \\
    \eIf{$\alpha = 1$}{
        $l \gets 0$
    }{
        $l \gets \alpha/2$ \\
    }
    $h \gets \alpha$ \\
    $m \gets \frac{l+h}{2}$ \\
    $p \gets \prob_{w_{k}}\big(w_{k} \in \polyabc{A}{m b}{c}\big)$ \\
    \While{$p \neq \gamma^{\frac{1}{N}}$}{
        \eIf{$p > \gamma^{\frac{1}{N}}$}{
            $h \gets m$ \\
        }{
            $l \gets m$ \\
        }

        $m \gets \frac{l+h}{2}$ \\
        $p \gets \prob_{w_{k}}\big(w_{k} \in \polyabc{A}{m b}{c}\big)$ \\
    }

    $\mathcal{S} \gets \polyabc{A}{m b}{c}$
    \caption{Bisection algorithm for obtaining polyhedral bounded disturbance sets. If $\gamma = \alpha$, then $\mathcal{S} = \boundeddist$, and if $\gamma = 1 - \alpha$, then $\mathcal{S} = \boundeddistover$.}
    \label{alg:bounded-set-bisection}
\end{algorithm}

\subsection{I.i.d. Gaussian disturbances}
\label{sub:iid-gaussian-disturbances}

For i.i.d. Gaussian disturbances we can determine minimum volume set satisfying
\eqref{eq:choose_boundeddist_constraint2} in the form of an ellipsoid. If the
disturbance $w_{k}$ in \eqref{eq:nonlin} is an $n$-dimensional Gaussian
random variable with mean vector $\mu \in \realnums^{n}$ and covariance matrix
$\Sigma \in \realnums^{n \times n}$, then its probability density function
is~\cite[Ch. 29]{billingsley_probability_1995}
\begin{align}
    \psi_{w_{k}}(y)={(2\pi)}^\frac{-n}{2}{\vert\Sigma\vert}^\frac{-1}{2}\exp{\left(-\frac{{(y-\mu)}^\top\Sigma^{-1}{(y-\mu)}}{2}\right)},
    \nonumber
\end{align}
$y \in \realnums^{n}$.

\def\rellipse{\mathcal{Q}_{R^{2}}}
Consider an $n$-dimensional ellipsoid, parameterized by $R^2\in[0,\infty)$,
\begin{align}
    \rellipse=\left\{y\in \realnums^{n}: {(y-\mu)}^\top\Sigma^{-1}{(y-\mu)}\leq
R^2\right\}.\label{eq:ellipsoid}
\end{align}
For $\mu=0,\Sigma=r^2 I_{n}$, we have $\rellipse=\{y: \transpose{y} y \leq r^2
R^2 \}$, a $n$-dimensional hypersphere of radius $rR$. We aim to compute the
parameter $R^2$ such that $\prob_{w_{k}}(w_{k}\in \rellipse\} =
\gamma^\frac{1}{N}$. If $\gamma = \alpha$, then $\boundeddist = \rellipse$ and
if $\gamma = (1 - \alpha)$ then $\boundeddistover = \rellipse$ will generate
bounded disturbance sets that satisfy the conditions of
Theorem~\ref{thm:subset-and-superset}.

\def\retaellipse{\mathcal{Q}_{\eta, R^{2}}}
Given a normally distributed $n$-dimensional random vector $\eta\sim
\mathcal{N}(0,I_{n})$, we have $w_{k} = {\Sigma}^\frac{1}{2}\eta+\mu$~\cite[Ch.
29]{billingsley_probability_1995}.  Also, $\rellipse={\Sigma}^\frac{1}{2}
\retaellipse\oplus\{\mu\}$ with $\retaellipse=\{s\in \realnums^{n}:s^\top
s\leq R^2\}$. Since the affine transformation of $\eta$ to $w_{k}$ is
deterministic, $\mathbb{P}_{w_{k}}(w_{k}\in\rellipse) =
\mathbb{P}_{\eta}(\eta\in \retaellipse) = \gamma^\frac{1}{N}$.  From~\cite[Ex.
20.16]{billingsley_probability_1995}, we have
$$
    F_{\chi^2(n)}(R^2)=\mathbb{P}\left\{\chi^2(n)\leq R^2\right\}=\mathbb{P}\{\eta\in \retaellipse\}=\gamma^\frac{1}{N}
$$
where $\chi^2(n)$ is a chi-squared random variable with $n$ degrees of freedom
and $F_{\chi^2(n)}(\cdot)$ denotes its cumulative distribution function.
Consequently, we have
\begin{align}
    R^2=F^{-1}_{\chi^2(n)}\left(\gamma^\frac{1}{N}\right).\label{eq:R2}
\end{align}
By solving \eqref{eq:R2} with $\gamma = \alpha$ or $\gamma = 1 - \alpha$ and then
using the result in \eqref{eq:ellipsoid}, we can obtain a feasible
$\boundeddist$ or $\boundeddistover$, respectively, for any Gaussian
disturbance.


\section{Computational challenges}
\label{sec:computational_challenges}

Since Lagrangian methods are grid-free, they have the potential to provide
substantial numerical benefits, most notably a dramatically increased
computational speed and applicability to high-dimensional systems. The
trade-off for the increased computational speed is a degree of conservativeness
in the approximations. It was shown in \cite{GleasonCDC2017} that for Gaussian
systems with low variance, these approximations become tight to the actual
solution obtained via dynamic programming. Still, there are several important
computational challenges at present that can limit the efficacy of these
methods for high-dimensional system analysis.

The primary challenge is the implementation of Algorithms \ref{alg:rets} and
\ref{alg:aets} with current computational geometry methods. In particular the
types of set operations required often limit the ability for current
methodologies to effectively scale with increasing dimension. For linear
systems, the one-step backward reach set it given by \eqref{eq:reachOlinCts},
thus for the disturbance maximal reach set we need intersections,
Minkowski addition, and affine transformations. To compute the
disturbance minimal reach set we need all the previous operations as well as
Minkowski (Pontryagin) difference and, potentially, unions, or 
convex hulls $\mathrm{Co}(\cdot)$ if multi-disturbance set methods
are used.

Table \ref{tab:set-operations-comp-geom} summarizes various capabilities for
current methodologies to perform the necessary set operations required to
compute the robust and augmented effective target sets. The following
subsections will provide additional detail regarding these different
computational geometry methods and examine their applicability for computing
the robust and augmented effective target sets.

\begin{table*}
    \definecolor{rowgray}{gray}{0.8}
    \def\tabunderapprox{\ensuremath{-}}
    \def\taboverapprox{\ensuremath{+}}
    \def\tabexact{\ensuremath{=}}
    \begin{center}
    \begin{tabular}{|c|c|c|c|c|c|c|}
        \hline
        Toolbox & \multicolumn{2}{c|}{{\tt MPT} \cite{MPT3}}      & {\tt ET} \cite{ellipsoidal} &                   & {\tt CORA} \\ \hline
        Methods & $\mathcal{H}$-polytope & $\mathcal{V}$-polytope & Ellipsoids                  & Support Functions & Zonotope   \\ \hline
        \rowcolor{rowgray} 
        $\cap$        & \tabexact &           & \tabunderapprox & \taboverapprox, \tabexact & \taboverapprox \\ \hline
        \rowcolor{rowgray}
        $T$           & \tabexact & \tabexact & \tabexact       & \tabexact                 & \tabexact      \\ \hline
        \rowcolor{rowgray}
        $\minkadd$    & \tabexact & \tabexact & \taboverapprox  & \tabexact                 & \tabexact      \\ \hline
        $\minkdiff$   & \tabexact &           & \tabunderapprox &                           &                \\ \hline
        $\cup$        &           & \tabexact & \taboverapprox  &                           &                \\ \hline
        $\mathrm{Co}$ &           & \tabexact & \taboverapprox  &                           &                \\ \hline
                      & \cite{guernic2009} & \cite{guernic2009} & \cite{kurzhanski2000-ellipsoid} & \cite{leguernic2010,guernic2009,maidens_2013} & \cite{guernic2009} \\ \hline
    \end{tabular}
    \end{center}
    \caption{
        Computational geometry tools and their ability to perform various set
        operations. To compute the the disturbance maximal reach set ass the
        operations shaded in gray are needed. For the disturbance minimal reach
        set, all six operations are needed (unions and convex hulls are only
        required when using multiple disturbances). Each method can compute
        the operation exactly ($\tabexact$) or will underapproximate 
        ($\tabunderapprox$) or overapproximate ($\taboverapprox$) the result. 
        References for the operations computability are provided at the base
        of the table.
    }
    \label{tab:set-operations-comp-geom}
\end{table*}

\subsection{Ellipsoidal Methods} 
\label{sub:ellipsoidal_methods}

Ellipsoidal methods \cite{kurzhanski2000-ellipsoid,ellipsoidal} can handle all
of the fundamental operations previously discussed but requires that all sets
be represented as ellipses. Hence for many methods---e.g. intersections,
unions, Minkowski addition---this methodology cannot compute exact
representations of the resulting set and thus determine approximations. For
example, the intersection of two ellipses does not necessary result in an
ellipse, thus ellipsoidal methods underapproximate these intersection by
finding the largest ellipse that exists inside of the intersection. Conversely
for unions a bounding ellipse is used to overapproximate the operation.

For the disturbance minimal and maximal reach sets these approximations make ellipsoidal
methods unappealing. For the maximal reach set, we need to
consistently overapproximate to maintain guarantees. However, ellipsoidal
methods underapproximate intersections which makes ellipsoidal methods not
applicable. Ellipsoidal
methods can handle the appropriate operations for the disturbance minimal reach
set, however because this computation already provides a conservative
underapproximation the additional underapproximations made from the ellipsoidal
methods compound the conservativeness. Hence ellipsoidal methods are not
advantageous.


\subsection{$\mathcal{H}$ and $\mathcal{V}$-polytopes} 
\label{sub:polytopes}

\def\hpoly{\mathcal{H}}
\def\vpoly{\mathcal{V}}

Perhaps the most applicable method, $\hpoly$ and $\vpoly$-polytopes represent
polytopic sets in either halfspace or vertex form, respectively. One of the
most current toolboxes for polytopic computational geometry is the Model 
Parametric Toolbox \cite{MPT3}, written
in MATLAB. For most set operations there are convenient ways to exactly compute
the aforementioned set operation using $\hpoly$ or $\vpoly$-polytope
representation. Many operations are only feasible for a specific representative
form, e.g. intersections with $\hpoly$-polytope form. Thus, to use these
methods to compute the effective target sets, we must be able to switch between
the two forms, which is limited by the {\em vertex-facet enumeration problem}.
This problem limits ability for these methods to extend to higher-dimensional
spaces as the vertex-facet problem becomes more and more costly.

Additionally polytopic representation also suffers from a need to approximation
generic shapes as polytopes. For example, representing the ellipsoid computed
in Section \ref{sub:iid-gaussian-disturbances}, $\rellipse$, can be problematic
In order to obtain an approximation of the ellipse using either $\hpoly$ or
$\vpoly$-polytope representations we need to to choose $m$ directions in $n$
dimensional space on the surface of the ellipse. For an underapproximation each
point becomes a vertex of the polytope and as an overapproximation the plane
tangential to the ellipsoid's surface at that point becomes a halfspace. Which
points on the surface would best approximate the ellipse is a hard problem but
a good heuristic is to use points that are equidistant, i.e. $x_{i}^{T}x_{j} =
c$ for all $i \neq j$, $c$ is some constant. Choosing equidistant vectors
eliminates the possibility of creating unbounded or very large
overapproximations of the ellipsoid from poor directional choice; instead
unboundedness or large overapproximations would arise if $m$ is very small,
especially compared to the dimension size $n$, e.g. if $m < n$. For a
two-dimensional system choosing the appropriate directions is a simple problem
in which $x_{i}^{T}x_{j} = \cos(2*pi/m)$, see Figure
\ref{fig:ellipse-approximatinos}.
\begin{figure}
  \centering
  \includegraphics{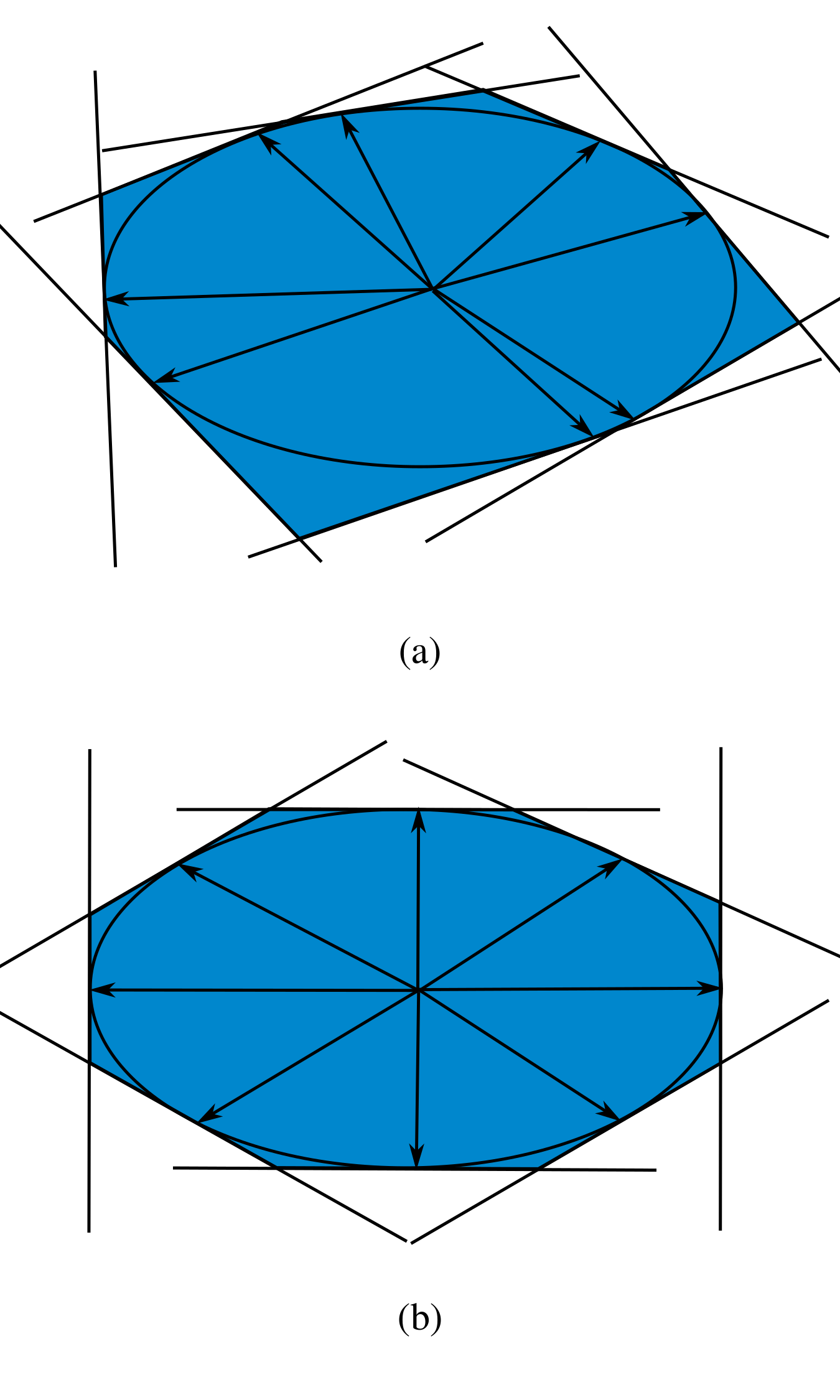}
  \caption{Ellipsoid approximations from (a) random directions (b) equidistant vector generation, $n = 2$, $m = 8$.}
  \label{fig:ellipse-approximatinos}
\end{figure}
In systems with $n>2$, however, approximations are done in a random manner
\cite{harman2010} by sampling points on the exterior of the ellipse, and hence
polyhedral representations will vary. 

Additionally, accurate approximations require an increasing number of
vertices/halfspaces as dimensionality increases, again increasing computational
requirements. The use of bounded disturbances that are constructed as simple
polytopic structures, as are made in Algorithm \ref{alg:bounded-set-bisection},
can help reduce computational difficulties in higher dimensions.
In \cite{gleason_2018} it was shown that using a single disturbance box allowed for the underapproximation methods to compute simulations on a 6-dimensional chain of integrators. Additionally for low-dimensional systems the average computation time for a disturbance box was 50\% of the computation time for a polytopic ellipsoid approximation.

Because the necessary set operations can be performed without additional
conservativeness through approximations, we employ polytopic methods and use
{\tt MPT} to compute the robust effective target set in the examples.


\subsection{Zonotopes} 
\label{sub:zonotopes}

Zonotopes have been demonstrated to be effective tools for higher-dimensional
reachability analysis \cite{althoff_2011}. For zonotopes,
intersections are represented as zonotope bundles, and the Minkowski sum of a set
and a zonotope bundle yields an \emph{overapproximation}. This
restricts us from using zonotopes for the disturbance minimal reach set since it
would nullify our conservativeness guarantee.

Because the disturbance maximal reach set is an overapproximation zonotopes are
a viable option for this computation. We do not use these methods in this paper,
however, because we do not wish to compound conservativeness, i.e. take 
overapproximations of an overapproximation (similar to ellipsoidal methods).


\subsection{Star Methods} 
\label{sub:star_methods}

The star methods are capable of performing undisturbed reach computations 
\cite{bak2017} but have not been reported to be able to handle the set 
operations individually. Hence we cannot evaluate their efficacy with our
methods.



\subsection{Support Functions} 
\label{sub:support_functions}

Support functions are a powerful tool for representing generic convex sets
\cite{gardner2013,leguernic2010,maidens_2013,le_guernic_reachability_2009}. They have very simple and
effective methods for computing affine transformations and Minkowski summation
\cite{maidens_2013,guernic2009}, and can compute Minkowski differences with a
polytopic set \cite{kolmanovsky1998}. However, intersections with support
functions are more problematic. Exact computation of intersections can be
determined through a computationally difficult optimization problem
\cite{guernic2009} but they can be overapproximated simply. This
overapproximation eliminates their applicability for computation of the robust
effective target set. Additionally, this intersection approximation does not
permit additional Minkowski summation \cite{guernic2009}. 



\section{Examples} 
\label{sec:examples}

All calculations were done using MATLAB R2017a on a computer with an Intel Xeon
E3-1270 v6 processor and 32 GB RAM (2400MHz DDR4 UDIMM ECC) running Ubuntu
16.04. The computations were performed using the Stochastic Reachability
Toolbox (SReachTools) which utilizes the model parametric toolbox 
(MPT 3.0) \cite{MPT3} for the polyhedra and set operations. All simulation 
code can be found at \texttt{https://hscl.unm.edu/software/code/}.

\subsection{Two-dimensional double integrator}
\label{sec:double-int-ex}

We first consider a simple double integrator example. This example allows for
a direct comparison of the conservatism of the results and a comparison of
the computations speed against dynamic programming methods. The underapproximation
methods were compared against dynamic programming in \cite{GleasonCDC2017}.
Here, we reiterate these results as well as demonstrate the overapproximation.

The discretized double integrator dynamics are
\begin{equation}
  x_{k+1} = \left[ \begin{array}{cc}
    1 & T \\
    0 & 1
  \end{array}\right] x_{k} + \left[\begin{array}{c}
    \frac{T^{2}}{2} \\
    T
  \end{array}\right] u_{k} + w_{k}
  \label{eq:disys}
\end{equation}
with state $x_{k} \in \statespace\subseteq\realnums^{2}$, input $u_{k} \in
\inputspace \subseteq \realnums$, $T = 0.25$, and Gaussian disturbance $w_{k}
\sim \mathcal{N}(0,0.005 \cdot I_{2})$. We consider the viability problem, or 
equivalently the target tube problem with $\targetset_{k} = \{ x_{k} 
\in \mathcal{X} : |x_{i}| \leq 1, i \in \{1, \dots, n\} \}$ for all
$k \in \{0, \dots, N \}$. In this example $\boundeddist$ and $\boundeddistover$
are ellipsoidal sets obtained via \eqref{eq:R2} and \eqref{eq:ellipsoid} with
$\alpha = 0.8$.

Figure \ref{fig:highvar_doubleint} compares the underapproximation and 
overapproximation, via Algorithms \ref{alg:rets} and \ref{alg:aets}, 
to the level sets computed via dynamic programming, as
in \cite{summers2010_verification}. 
A comparison between the total computation time for both approaches is provided
in Table \ref{tab:di-simtimes}. The accuracy of dynamic programming relies on
its grid size, resulting in a trade-off between accuracy and computation speed,
from which Algorithms \ref{alg:rets} and \ref{alg:aets} do not suffer. 

\begin{table}
  \begin{center}
    \begin{tabular}{|c|c|c|c|}
      \hline
      Grid Size & Dynamic Programming & Approximations & Ratio\\ \hline
      $41 \times 41$ & 8.16 & 0.98 & 8.3\\
      $82 \times 82$ & 59.76 & 0.98 & 60.9\\ \hline
    \end{tabular}
  \end{center}
  \caption{Computation times, in seconds, for the double integrator problem, 
           solved via dynamic programming and via Algorithm \ref{alg:rets} and 
           \ref{alg:aets}. The ratio of computation times is the computation 
           time via dynamic programming divided by computation time via 
           Lagrangian approximations.}
  \label{tab:di-simtimes}
\end{table}

Both the under and overapproximation in this example are conservative. This is
the result of the need to be robust, in the case of the underapproximation,
to all disturbances in a bounded set. For Gaussian disturbances, as the variance
of the disturbance reduces, this bounded set also decreases in size
as a direct consequence of \eqref{eq:ellipsoid}. We demonstrated previously \cite{GleasonCDC2017},
omitted here for brevity, that as the variance decreases the underapproximation
becomes tight to the solution obtained via dynamic programming. This result
holds true for the overapproximation as well.


\begin{figure*}
  \centering
  \includegraphics{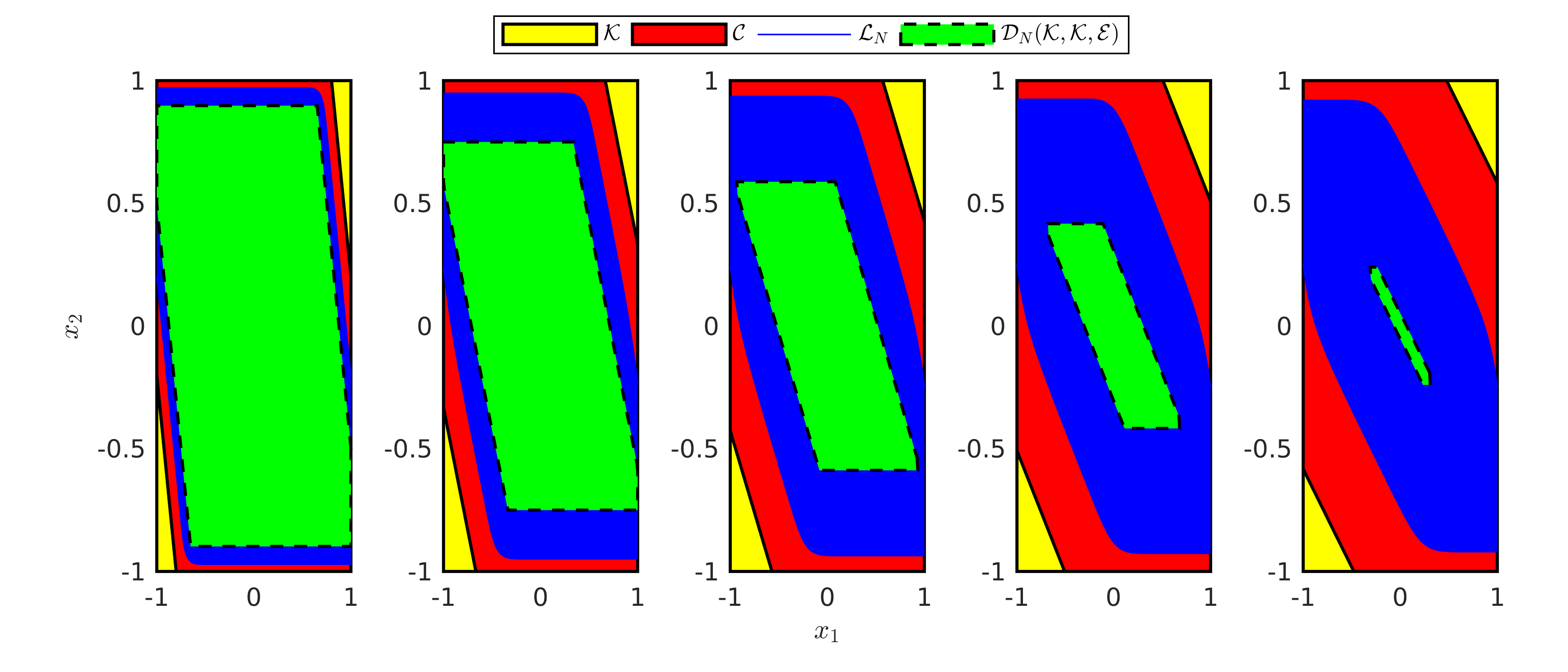}
  \caption{Double integrator example computation and comparison of 
           $\stochlevelset{k}$, $\reachunder{k}$ and $\reachover{k}$, where
           $\boundeddist$ and $\boundeddistover$ are ellipsoidal sets. Here,
           $N = 5$ and, from left to right, $k = 4, 3, 2, 1, 0$.}
  \label{fig:highvar_doubleint}
\end{figure*}

\subsection{Chain of integrators}

As was mentioned in \ref{sub:polytopes} because ellipsoidal representations of
polytopes can require high numbers of vertices or facets to represent as the
dimension of the system increases it is often advantageous, from a computational
time perspective, to use simpler polytopes that are obtained with Algorithm
\ref{alg:bounded-set-bisection}. Do demonstrate this we examine a chain of
integrators,
\begin{align}
   x_{k+1}&= \left[ {\begin{array}{ccccc}
   1 & T & \frac{1}{2}T^2 & \ldots  & \frac{1}{(n-1)!} T^{n-1}  \\  
   0 & 1 & T & & \\  
   \vdots & & & \ddots & \vdots \\
   0 & 0 & 0 & \ldots & T \\
   0 & 0 & 0 & \ldots & 1 \\
   \end{array} } \right]x_k \nonumber \\
   &\quad+ {\left[ {\begin{array}{cccc}
   \frac{1}{n!} T^n & \ldots & \frac{1}{2} T & T  \end{array} }
   \right]}^\top u_k + w_k\label{eq:integrator-chain}
\end{align}
with state $x_{k} \in \mathbb{R}^{n}$, input $u_{k} \in 
\mathcal{U} = \{ u \in \realnums : |u| \leq 0.1\}$, and
$w_{k} \sim \mathcal{N}(0, 1 \times 10^{-5} \cdot I_{n})$.

We analyze the same target tube (viability) problem as was done for the double
integrator with $T = 0.25$, $N = 5$, and $\alpha = 0.8$.
With this example, we examine the scalability of these Lagrangian methods. 
For $n = 2, \dots, 6$, we computed 1) $\multreachunder{0}{\mathrm{box}}$
where $\boundeddist_{\mathrm{box}}$ is an origin-centered $n$-d box
\begin{equation}
    \boundeddist_{\hbox{box}} = \polyabc{\transpose{[I_{n}^{-1}, -I_{n}^{-1}]}}{b}{0}
\end{equation}
obtained via solution to Algorithm \ref{alg:bounded-set-bisection};
2) $\multreachunder{0}{R^{2}}$ where $\boundeddist_{R^{2}}$ is a polyhedral 
approximation of an ellipsoid given by \eqref{eq:ellipsoid}, sampled in 200 
random directions; and 3) an underapproximation, $\radlin{0}{[1,3]}$ 
where $\boundeddist_{1} = \boundeddist_{\hbox{box}}$ and
\begin{equation}
    \boundeddist_{i} = \polyabc{\transpose{[I_{n}^{-1}, -I_{n}^{-1}]}}{b}{c_{i}} \qquad i = 2, 3
\end{equation}
are off-center, $c_{i} \neq 0_{n}$, $n$-d boxes from Algorithm \ref{alg:bounded-set-bisection}. 

Figure \ref{fig:chain-of-int-comptimes} shows the computation times
for each set. We were unable to compute $\multreachunder{0}{R^{2}}$ and
$\radlin{0}{[1,3]}$ for $n > 4$.
\begin{figure}
    \centering
    \includegraphics{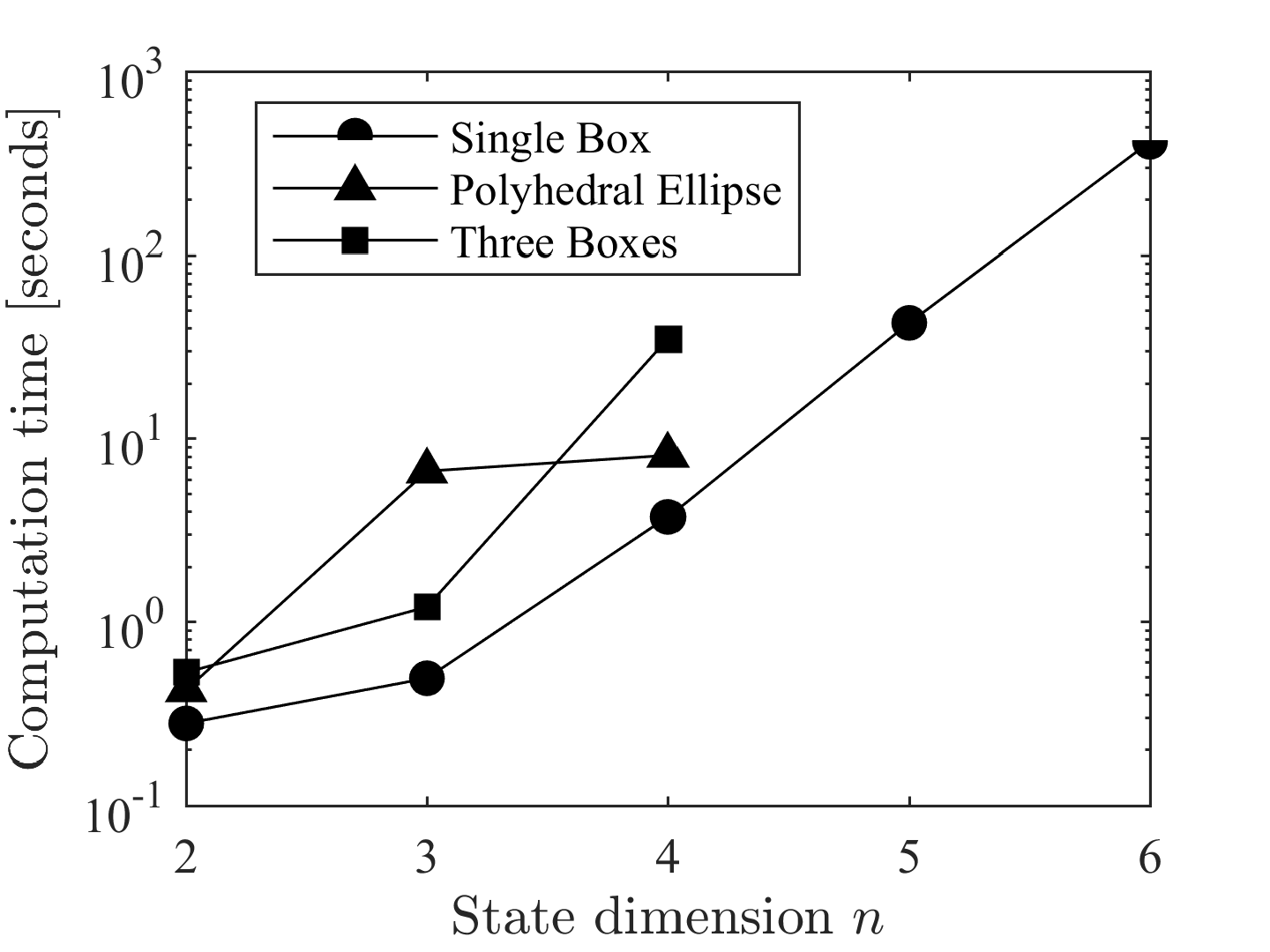}
    \caption{Computation time for $\multreachunder{0}{R^{2}}$ computed with a single polyhedral 
    overapproximation of an ellipse using 200 random directions (triangles),
    $\multreachunder{0}{\mathrm{box}}$ computed with a single box centered at the origin (circles), 
    and $\radlin{0}{[1,3]}$ with three boxes (squares).}
    \label{fig:chain-of-int-comptimes}
\end{figure}
For 2 and 3-dimensional integrators using multiple boxes was still faster than
computation using a polyhedral approximation an ellipse, and while disturbance 
minimal reach set could not be computed for the ellipsoid disturbance set, we
were able to simulate up to a 6-dimensional integrator when using a single $n$-d
box disturbance. The rise in computation time is predominately caused by the 
burden of solving the vertex-facet enumeration problem for higher dimensions.

\subsection{Application to space-vehicle dynamics}

We now consider a more realistic problem, motivated by the rendezvous and
docking problem for a pair of space vehicles. The goal is for one spacecraft,
referred to as the deputy, to approach and dock to an orbiting satellite,
referred to as the chief, while remaining in a predefined line-of-sight cone,
in which accurate sensing of the other vehicle is possible. The dynamics are
described by the Clohessy-Wiltshire-Hill (CWH) equations
\cite{wiesel1989_spaceflight}
\begin{equation}
  \begin{split}
    \ddot{x} - 3 \omega x - 2 \omega \dot{y} = \frac{F_{x}}{m_{d}} \\
    \ddot{y} + 2 \omega \dot{x} = \frac{F_{y}}{m_{d}}
  \end{split}
  \label{eq:2d-cwh}
\end{equation}
The chief is located at the origin, the position of the deputy is $x,y \in
\realnums$, $\omega = \sqrt{\mu/R_{0}^{3}}$ is the orbital frequency, $\mu$ is
the gravitational constant, and $R_{0}$ is the orbital radius of the spacecraft.

We define the state as $z = [x,y,\dot{x},\dot{y}] \in \realnums^{4}$ and input
as $u = [F_{x},F_{y}] \in \inputspace\subseteq\realnums^{2}$. We discretize
the dynamics (\ref{eq:2d-cwh}) in time to obtain the discrete-time LTI system,
\begin{equation}
  z_{k+1} = A z_{k} + B u_{k} + w_{k} \label{eq:lin-cwh}
\end{equation}
with $w_{k} \in \realnums^{4}$ a Gaussian i.i.d. disturbance, with
$\mathbb{E}[w_{k}] = 0$, $\Sigma = \mathbb{E}[w_{k}w_{k}^\top] =
10^{-4}\times\mbox{diag}(1, 1, 5 \times 10^{-4}, 5 \times 10^{-4})$.
The bounded disturbance sets $\boundeddist$ and $\boundeddistover$ are 
origin-centered 4-dimensional boxes obtained with Algorithm 
\ref{alg:bounded-set-bisection}.

We define the target sets and as in \cite{lesser2013_spacecraft}
\begin{align}
  \targetset_{N} &= \left\{ z \in \realnums^{4}: |z_{1}| \leq 0.1, -0.1 \leq z_{2} \leq 0, \right. \nonumber \\
              &\hskip1.9cm \left. |z_{3}| \leq 0.01, |z_{4}| \leq 0.01 \right\} \\
  \targetset_{i} &= \left\{z \in \realnums^{4}: |z_{1}| \leq z_{2}, |z_{3}| \leq 0.05, |z_{4}| \leq 0.05 \right\}\\
                 & \qquad i = 1, 2, \dots, N-1 \\
  \inputspace &= [-0.1, 0.1] \times [-0.1, 0.1].
\end{align}

For a horizon, $N = 5$, and a level set, $\alpha = 0.8$, $\boundeddist = \{ s :
s^{\top} \Sigma^{-1} s \leq 6.26\}$, from \eqref{eq:ellipsoid}. Figure
\ref{fig:cwh} shows a cross-section at $\dot{x} = \dot{y} = 0$ of the
resulting underapproximation of the $N=5$ stochastic reach-avoid level set. The
computation time for this level set was $14.5$ seconds. Direct comparison of
results via dynamic programming is not possible due to dimensionality of the
state. However, in \cite[Figure 2]{lesser2013_spacecraft}, a cross-section of
$\dot{x} = \dot{y} = 0.9$ of the stochastic reach-avoid set was approximated
via convex chance-constrained optimization and particle approximation methods. 
Although these approaches require gridding, they are computationally feasible,
unlike dynamic programming.  The computation time reported
in~\cite{lesser2013_spacecraft} is approximately $20$ minutes (about $82$ times
slower) for just a subset of the state space.
\begin{figure}
  \centering
  \includegraphics{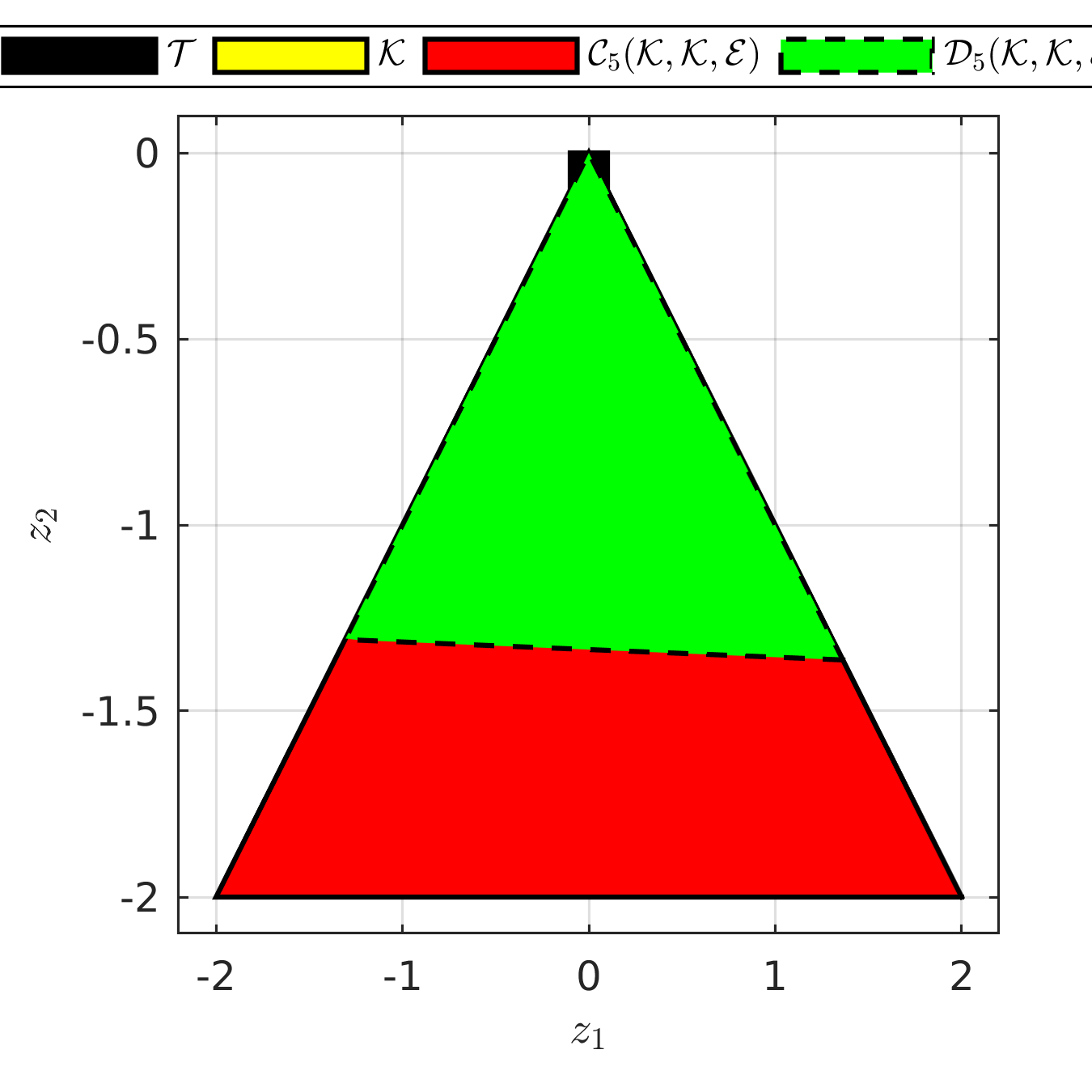}
  \caption{Computation of $\reachunder{0}$ and $\reachover{0}$ for the 
           4-dimensional CWH spacecraft rendezvous docking problem. Outside of
           the realm of dynamic programming we are still able to obtain an under
           and overapproximation of the stochastic reach set using Lagrangian
           methods.}
  \label{fig:cwh}
\end{figure}

\section{Conclusion} 
\label{sec:conclusion}

In this work we have demonstrated how Lagrangian recursions can be used to
under and overapproximate the stochastic $\alpha$-level reach set of the
reachability of the target tube problem. The reachability of a target tube
problem was shown to be a more generalized version of both the terminal-time
reach-avoid and viability problems. These recursive methods provide substantial
gains in computation time when compared to dynamic programming since they do
not require on gridding the system, nor are the accuracy of the obtained
results related to the spacing of the grid. The Lagrangian results are however,
conservative, and while we are able to demonstrate the ability to simulate 
systems with moderate dimensional size, the current limitations in computational
geometry tools limit this growth. In the examples demonstrated, the requirement
to solve the vertex-facet enumeration problem is a limiting factor.

In the future, we intend to continue work to allow for Lagrangian methods to 
provide approximations sets for higher dimensional systems. The use of zonotopes
for the overapproximation, for example, may allow for evaluation of higher
dimensional systems. Additionally, Lagrangian methods provide fast simulation
for a set of states for which existence of a closed-loop feedback controller
is guaranteed, however an explicit controller is not provided. We also plan to
develop methods that will use the disturbance minimal reach tube, i.e. 
$[\reachunder{0}, \reachunder{1}, \dots, \reachunder{N}]$ to determine 
open and closed-loop control strategies that will have the probabilistic 
guarantees of safety established by the approximation. Controllers can be 
equivalently found for the disturbance maximal reach tube.


\bibliographystyle{IEEEtran}
\bibliography{refs}

\begin{thebibliography}{10}
\providecommand{\url}[1]{#1}
\csname url@samestyle\endcsname
\providecommand{\newblock}{\relax}
\providecommand{\bibinfo}[2]{#2}
\providecommand{\BIBentrySTDinterwordspacing}{\spaceskip=0pt\relax}
\providecommand{\BIBentryALTinterwordstretchfactor}{4}
\providecommand{\BIBentryALTinterwordspacing}{\spaceskip=\fontdimen2\font plus
\BIBentryALTinterwordstretchfactor\fontdimen3\font minus
  \fontdimen4\font\relax}
\providecommand{\BIBforeignlanguage}[2]{{%
\expandafter\ifx\csname l@#1\endcsname\relax
\typeout{** WARNING: IEEEtran.bst: No hyphenation pattern has been}%
\typeout{** loaded for the language `#1'. Using the pattern for}%
\typeout{** the default language instead.}%
\else
\language=\csname l@#1\endcsname
\fi
#2}}
\providecommand{\BIBdecl}{\relax}
\BIBdecl

\bibitem{lesser2013_spacecraft}
K.~Lesser, M.~Oishi, and R.~S. Erwin, ``Stochastic reachability for control of
  spacecraft relative motion,'' December 2013.

\bibitem{Tomlin2003}
C.~Tomlin, I.~Mitchell, A.~Bayen, and M.~Oishi, ``Computational techniques for
  the verification of hybrid systems,'' \emph{Proc. IEEE}, vol.~91, no.~7, pp.
  986--1001, 2003.

\bibitem{summers2011stochastic}
S.~Summers, M.~Kamgarpour, J.~Lygeros, and C.~Tomlin, ``A stochastic
  reach-avoid problem with random obstacles.''\hskip 1em plus 0.5em minus
  0.4em\relax ACM, 2011, pp. 251--260.

\bibitem{maidens_2013}
J.~N. Maidens, S.~Kaynama, I.~M. Mitchell, M.~M. Oishi, and G.~A. Dumont,
  ``Lagrangian methods for approximating the viability kernel in
  high-dimensional systems,'' \emph{Automatica}, vol.~49, no.~7, pp.
  2017--2029, 2013.

\bibitem{KariotoglouECC2011}
N.~Kariotoglou, D.~M. Raimondo, S.~Summers, and J.~Lygeros, ``A stochastic
  reachability framework for autonomous surveillance with pan-tilt-zoom
  cameras,'' 2011, pp. 1411--1416.

\bibitem{ManganiniCYB2015}
G.~Manganini, M.~Pirotta, M.~Restelli, L.~Piroddi, and M.~Prandini, ``Policy
  search for the optimal control of {Markov} {Decision} {Processes}: A novel
  particle-based iterative scheme,'' pp. 1--13, 2015.

\bibitem{summers2010_verification}
S.~Summers and J.~Lygeros, ``Verification of discrete time stochastic hybrid
  systems: A stochastic reach-avoid decision problem,'' \emph{Automatica},
  vol.~46, pp. 1951--1961, September 2010.

\bibitem{abate2008_reachability}
A.~Abate, M.~Prandini, J.~Lygeros, and S.~Sastry, ``Probabilistic reachability
  and safety for controlled discrete time stochastic hybrid systems,''
  \emph{Automatica}, vol.~44, pp. 2724--2734, October 2008.

\bibitem{AbateHSCC2007}
A.~Abate, S.~Amin, M.~Prandini, J.~Lygeros, and S.~Sastry, ``Computational
  approaches to reachability analysis of stochastic hybrid systems,'' 2007, pp.
  4--17.

\bibitem{KariotoglouECC2013}
N.~Kariotoglou, S.~Summers, T.~Summers, M.~Kamgarpour, and J.~Lygeros,
  ``Approximate dynamic programming for stochastic reachability,'' 2013, pp.
  584--589.

\bibitem{KariotoglouSCL2016}
N.~Kariotoglou, K.~Margellos, and J.~Lygeros, ``On the computational complexity
  and generalization properties of multi-stage and stage-wise coupled scenario
  programs,'' vol.~94, pp. 63--69, 2016.

\bibitem{VinodLCSS2017}
A.~P. Vinod and M.~M.~K. Oishi, ``Scalable underapproximation for the
  stochastic reach-avoid problem for high-dimensional lti systems using fourier
  transforms,'' \emph{IEEE Ctrl. Syst. Lett.}, vol.~1, no.~2, pp. 316--321,
  October 2017.

\bibitem{VinodHSCC2018}
------, ``Scalable underapproximative verification of stochastic {LTI} systems
  using convexity and compactness,'' in \emph{Hybrid Systems: Control and
  Computation}, Porto, Portugal, 2018, (accepted).

\bibitem{guernic2009}
C.~{Le Geurnic}, ``Reachability analysis of hybrid systems with linear
  continuous dynamics,'' Ph.D. dissertation, Universit\'{e} Joseph-Fourier,
  2009.

\bibitem{borelli_2017}
F.~Borelli, A.~Bemporad, and M.~Morari, \emph{Predictive Control for Linear and
  Hybrid Systems}.\hskip 1em plus 0.5em minus 0.4em\relax Cambridge University
  Press, 2017.

\bibitem{MPT3}
M.~Herceg, M.~Kvasnica, C.~Jones, and M.~Morari, ``{Multi-Parametric Toolbox
  3.0},'' in \emph{Proc.~of the European Control Conference}, Z\"urich,
  Switzerland, July 17--19 2013, pp. 502--510,
  \url{http://people.ee.ethz.ch/\%7Empt/3/}.

\bibitem{ellipsoidal}
A.~A. Kurzhanskiy and P.~Varaiya, ``Ellipsoidal toolbox,'' University of
  California, Berkeley, Tech. Rep., 2006.

\bibitem{girard2005}
A.~Girard, ``Reachability of uncertain linear systems using zonotopes,'' in
  \emph{Proceedings of the 8th International Conference on Hybrid Systems:
  Computation and Control}, 2005, pp. 291--305.

\bibitem{althoff2015}
M.~Althoff, ``An introduction to cora 2015,'' in \emph{Proceedings of the
  Workshop on Applied Verification for Continuous and Hybrid Systems}, 2015.

\bibitem{bak2017}
S.~Bak and P.~S. Duggirala, ``Hylaa: A tool for computing simulation-equivalent
  reachability for linear systems,'' in \emph{Proceedings of the 20th
  International Conference on Hybrid Systems: Computation and Control}, 2017,
  pp. 173--178.

\bibitem{leguernic2010}
C.~{Le Guernic} and A.~Girard, ``Reachability analysis of linear systems using
  support functions,'' \emph{Nonlinear Analysis: Hybrid Systems}, vol.~4,
  no.~2, pp. 250--262, 2010.

\bibitem{bertsekas1971minimax}
D.~P. Bertsekas and I.~B. Rhodes, ``On the minimax reachability of target sets
  and target tubes,'' \emph{Automatica}, vol.~7, no.~2, pp. 233--247, 1971.

\bibitem{kerrigan2001robust}
E.~C. Kerrigan, ``Robust constraint satisfaction: Invariant sets and predictive
  control,'' Ph.D. dissertation, University of Cambridge, 2001.

\bibitem{rakovic2006_reach}
S.~V. Rakovi\'{c}, E.~C. Kerrigan, D.~Q. Mayne, and J.~Lygeros, ``Reachability
  analysis of discrete-time systems with disturbances,'' vol.~51, no.~4, pp.
  546--560, April 2006.

\bibitem{saintpierre1994_viability}
P.~Saint-Pierre, ``Approximation of the viability kernel,'' \emph{Applied
  Mathematics and Optimization}, vol.~29, no.~2, pp. 187--209, March 1994.

\bibitem{GleasonCDC2017}
J.~D. Gleason, A.~Vinod, and M.~M.~K. Oishi, ``Underapproximation of
  reach-avoid sets for discrete-time stochastic systems via {L}agrangian
  methods,'' in \emph{Proceedings of the IEEE Conference on Decision and
  Control}, Melbourne, Australia, December 2017.

\bibitem{kaynama2011}
S.~Kaynama, M.~Oishi, I.~M. Mitchell, and G.~A. Dumont, ``The continual
  reachability set and its computation using maximal reachability techniques,''
  in \emph{2011 50th IEEE Conference on Decision and Control and European
  Control Conference}, Dec 2011, pp. 6110--6115.

\bibitem{BertsekasSOC1978}
D.~P. Bertsekas and S.~E. Shreve, \emph{Stochastic Optimal Control: the
  Discrete Time Case}.\hskip 1em plus 0.5em minus 0.4em\relax Academic Press,
  1978.

\bibitem{rockafellar_variational_2009}
R.~Rockafellar and R.~Wets, \emph{Variational analysis}.\hskip 1em plus 0.5em
  minus 0.4em\relax Springer Science \& Business Media, 2009, vol. 317.

\bibitem{bertsekasDP}
D.~P. Bertsekas, \emph{Dynamic Programming and Optimal Control. {Vol}. 1},
  3rd~ed.\hskip 1em plus 0.5em minus 0.4em\relax Belmont, Mass: Athena
  Scientific, 2005.

\bibitem{BoydConvex2004}
S.~Boyd and L.~Vandenberghe, \emph{Convex optimization}.\hskip 1em plus 0.5em
  minus 0.4em\relax Cambridge Univ. Press, 2004.

\bibitem{benson1997multiplicative}
H.~Benson and G.~Boger, ``Multiplicative programming problems: analysis and
  efficient point search heuristic,'' \emph{Journal of Optimization Theory and
  Applications}, vol.~94, no.~2, pp. 487--510, 1997.

\bibitem{HorstGlobal2000}
R.~Horst, P.~M. Pardalos, and N.~Van~Thoai, \emph{Introduction to global
  optimization}.\hskip 1em plus 0.5em minus 0.4em\relax Springer Science \&
  Business Media, 2000.

\bibitem{billingsley_probability_1995}
P.~Billingsley, \emph{Probability and Measure}, 3rd~ed.\hskip 1em plus 0.5em
  minus 0.4em\relax New York: Wiley, 1995.

\bibitem{kurzhanski2000-ellipsoid}
A.~B. Kurzhanski and P.~Varaiya, ``Ellipsoidal techniques for reachability
  analysis,'' in \emph{Hybrid systems: computation and control}, 2000, pp.
  202--214.

\bibitem{harman2010}
R.~Harman and V.~Lacko, ``On decompositional algorithms for uniform sampling
  from n-spheres and n-balls,'' \emph{Journal of Multivariate Analysis}, vol.
  101, no.~10, pp. 2297 -- 2304, 2010.

\bibitem{gleason_2018}
J.~D. Gleason, A.~P. Vinod, and M.~M.~K. Oishi, ``Improving lagrangian
  underapproximations of stochastic reach-avoid level sets,'' in \emph{IEEE
  Conference on Decision and Control}, Miami, Florida, USA, 2018, (submitted).

\bibitem{althoff_2011}
M.~Althoff and B.~H. Krogh, ``Zonotope bund les for the efficient compu tation
  of reachable sets,'' in \emph{Proceedings of the IEEE Confernce on Decision
  and Control}, Orlando, FL, USA, 2011.

\bibitem{gardner2013}
R.~J. Gardner, D.~Hug, and W.~Weil, ``Operations between sets in geometry,''
  \emph{Journal of the European Mathematical Society}, vol.~15, no.~6, pp.
  2297--2352, 2013.

\bibitem{le_guernic_reachability_2009}
C.~Le~Guernic and A.~Girard, ``Reachability analysis of hybrid systems using
  support functions,'' in \emph{International {Conference} on {Computer}
  {Aided} {Verification}}.\hskip 1em plus 0.5em minus 0.4em\relax Springer,
  2009, pp. 540--554.

\bibitem{kolmanovsky1998}
I.~Kolmanovsky and E.~G. Gilbert, ``Theory and computation of disturbance
  invariant sets for discrete-time linear systems,'' \emph{Mathematical
  Problems in Engineering}, vol.~4, pp. 317--367, 1998.

\bibitem{wiesel1989_spaceflight}
W.~Wiesel, \emph{Spaceflight Dynamics}.\hskip 1em plus 0.5em minus 0.4em\relax
  New York: McGraw-Hill, 1989.

\end{thebibliography}

\end{document}